\newtheorem{definition}{Definition}
\newtheorem{theorem}{Theorem}
\begin{document}

\title{Generating quantum channels from functions on discrete sets}

\author{A. C. Quillen}  %orcid 0000-0003-1280-2054
\email{alice.quillen@rochester.edu\\}
\email{\url{https://orcid.org/0000-0003-1280-2054} }
\affiliation{Department of Physics and Astronomy, University of Rochester, Rochester, NY 14627, USA}
%\altaffiliation[]{}
%\author{Rayleigh Parker}
%\email{rpark11@ur.rochester.edu}
%\affiliation{Department of Physics and Astronomy, University of Rochester, Rochester, NY 14627, USA}
\author{Nathan Skerrett}
\email{nskerret@u.rochester.edu}
\affiliation{Department of Physics and Astronomy, University of Rochester, Rochester, NY 14627, USA}

\begin{abstract}
Using the recent ability of quantum computers to initialize quantum states rapidly with high fidelity, we use a function operating on a discrete set to create a simple class of quantum channels.  Fixed points and periodic orbits, that are present in the function, generate fixed points and periodic orbits in the associated quantum channel.    Phenomenology such as periodic doubling is visible in a 6 qubit dephasing channel constructed from a truncated version of the logistic map.   Using disjoint subsets, discrete function-generated channels can be constructed that preserve coherence within subspaces.   Error correction procedures can be in this class as syndrome detection uses an initialized quantum register.  A possible application for function-generated channels is in hybrid classical/quantum algorithms.  We illustrate how these channels can aid in carrying out classical computations involving iteration of non-invertible functions on a quantum computer with the Euclidean algorithm for finding the greatest common divisor of two integers.
\end{abstract}

%\begin{keywords}
%\end{keywords}

\maketitle

\tableofcontents

\section{Introduction}
\label{sec:intro}

Recent work has improved the fidelity and decreased the required duration
for quantum state initialization \citep{Yoshioka_2021,Johnson_2022}.
That makes it increasingly possible to use quantum state initialization as a component 
 of computations and algorithms that can be carried out on a quantum computer. 
Universality on a quantum computer encompasses non-unitary 
operations by leveraging interactions with a reservoir or environments allowing measurements \cite{Lloyd_2001,Verstraete_2009}.  
%Rather than explore engineering via general couplings to an environment (e.g., \cite{Verstraete_2009}),  or simulating the behavior of open systems via performing measurements \citep{Lloyd_2001},  we explore quantum operations can be computed using quantum initialization as a component. 
Algorithms that leverage the ability to carry out
 quick and accurate state initialization within an algorithm are predominantly non-unitary
 quantum operations and are a subset of possible quantum computations that
are facilitated via interaction with an external system.  
Yet, because of their simplicity, they are potentially useful as
 building blocks of more complex 
algorithms for the larger class of non-unitary quantum operations. 

We describe initialization of a quantum subsystem 
 with a quantum operation or channel from the space of density operators
of the quantum subsystem to itself; 
\begin{equation}
{\cal E}_{\rm init} (\rho) = \ket{\Psi_0}\!\bra{\Psi_0}, \label{eqn:init}
\end{equation}
 where 
$\rho$ is a density operator and $\ket{\Psi_0}$ 
 describes the initialized quantum state. 
We restrict our study so that quantum operations are unitary except initialization, 
so interaction with a reservoir or external environment effectively only occurs within initialization routines which could take place multiple times within an algorithm.  
%Initialization can be used to unentangle quantum subsystems, freeing up entangled qubits for later use.
During initialization, quantum subsystems are unentangled and superposition between states can be lost. 
Dynamics resembling that in classical systems, such as dissipation,  
 arises through loss of information that occurs when quantum states are reset. 

For a single qubit (a two state quantum system),  
all quantum operations (also called channels) can be described as affine transformations
of the polarization vector on the Bloch sphere (e.g., \cite{Nielsen_2010}).  However, the complexity of 
quantum channels in higher dimensional systems can exhibit 
more complex phenomena including attracting periodic orbits
or cycles, sometimes called rotating points \cite{Wolf_2012,Albert_2019}.
%We show that channels with asymptotic  subspaces containing multiple fixed points and cycles  can be constructed from quantum circuits containing conventional components but that incorporate resets within the routine. 

\subsection{Contributions}

We introduce a straightforward class of quantum channels that is generated by constructing Kraus operators from 
a function that operates on a finite set.  
We show how this class of channels can be implemented via unitary operations and state initialization.   We illustrate that these channels allow non-invertible functions to 
be iterated, cleanly delineate 
the asymptotic subspace from decaying subspaces, can exhibit cycles, can be completely
dephasing or can maintain coherence within a subspace.    As an example of
the potential complexity of these channels we illustrate phenomena of a dephasing
channel that is generated from the logistic map and exhibits much of the complexity of the logistic map. 

%We explore how a function operating on a finite set can be implemented via unitary operations and state initialization to generate a particular  class of quantum channels.  We find that these channels cleanly delineate  the asymptotic subspace from decaying subspaces, can exhibit cycles, and  can maintain coherence within a subspace.     

The class of channels we explore here would be useful as simple examples for 
illustrating phenomena that can be exhibited by quantum channels, including  
 cycles \citep{Wolf_2012,Albert_2019,Carbone_2020,Amato_2022}.
Hybrid classical/quantum algorithms  (e.g., \cite{Callison_2022}) might benefit from the ability to carry out classical algorithms on a quantum computer, 
without transferring information
back and forth between quantum and classical computers.  As an example of a 
classical algorithm that could be implemented on a quantum computer, we
show how to compute the greatest common divisor, a component of Shor's factoring
algorithm, with a dephasing function-generated channel. 

When a function-generated channel is not completely dephasing, coherence 
within a subspace can preserved. 
The amount of information lost is dependent upon the size of the subspace that is initialized
when the channel is implemented. 
We show that common quantum error correction protocols are found within our
class of quantum channels. 
Subspace preserving channels could be used as part of error recovery within
error correction procedures (e.g., \cite{Sivak_2023}).  

Lastly we list properties of all possible channels that fit within our class of function-generated channels for two and three state systems.   The examples are used to prove that the
eigenvalues of the matrix representation of a channel within this class are either zero or 
roots of unity. 

% Rapid reseting of qubits accelerates computation because it makes it easier  to reuse qubits. 

\subsection{Notation, definitions and conventions}

For notation and definitions we follow \cite{Wolf_2012}. 
We denote a finite dimensional complex vector space (a Hilbert space) as ${\cal H}$. 
The space of linear operators acting
on $\cal H$ is denoted by ${\cal B}({\cal H})$.
A quantum channel ${\cal E}$ is a linear,
completely positive, trace-preserving map from ${\cal B}({\cal H})$ to ${\cal B}({\cal H})$.
%A quantum channel can be , Kraus maps, completely-positive trace- preserving maps or quantum Markov chains) 
A density operator $\rho \in {\cal B}({\cal H})$ is a positive semidefinite operator
of trace 1. 
A set of Kraus operators $\{ K_i \} $ satisfies $\sum_i K_i^\dagger K_i = I$
where $I$ is the identity operator on $\cal H$, 
and gives a positive operator value measurement (POVM) or a quantum channel
via ${\cal E}(\rho) =   \sum_i K_i \rho K_i^\dagger$. 
The number of Kraus operators in a quantum channel, $n_K$, is called the Kraus rank.
The adjoint of a channel ${\cal E}^\ddagger $ is a linear 
map on ${\cal B}({\cal H})$ that is generated with the same set of Kraus operators; 
 ${\cal E}^\ddagger(X) = \sum_i K_i^\dagger  X K_i$.  
 %The adjoint map is unital, as  ${\cal E}^\ddagger(I)  = I$. 
We use $\rho$ for an operator that has a trace of unity and $X$ otherwise. 

States in $\cal H$ for a  basis 
are specified by an integer $x \in {\mathbb Z}_N$, for example $\ket{x}$, 
where the set ${\mathbb Z}_N = \{0, 1, 2, ...., N-1 \}$, and  $N = {\rm dim} {\cal H}$.
% A qubit is a 2 state quantum system.  
For a pure state $\ket{\psi}\in {\cal H}$, the associated density operator is denoted $\ket{\psi}\!\bra{\psi}$. 
When working with bipartite systems we refer to quantum system $A$  
and quantum system $B$ and tensor product 
 space ${\cal H}_A \otimes {\cal H}_B$.
 States and density operators belonging to 
different quantum spaces are designated with subscripts; for example 
$\ket{\psi}_{\!A}$, $\rho_{\!A}$,  or $\ket{\psi}_{\!AB}$.

We define the dephasing quantum channel ${\cal E}_{dp}$ in an $N$ dimensional quantum space, 
$\dim { \cal H} = N$,  as that described with the set of $N$ Kraus operators, $\{ K_{dp,i} \}$ with 
$K_{dp,i} = \ket{i}\!\bra{i}$ with $i \in {\mathbb Z}_N$. %Here the subscript `dp' is short-hand for dephasing. 
This channel ${\cal E}_{dp}$ sends off-diagonal elements
to zero and diagonal matrix elements are preserved;  
${\cal E}_{dp}(\ket{i}\!\bra{j}) = \delta_{ij}$. 
%All diagonal matrix elements are preserved ${\cal E}_{dp}(\ket{i}\!\bra{i}) = \ket{i}\!\bra{i}$.  
A channel $\cal E$ is described as {\it dephasing}  if ${\cal E}_{dp} ({\cal  E}) = {\cal E}$. 
 
 A channel is described as {\it unital} if ${\cal E}(I) = I$
 and as {\it ergodic} if it has a unique fixed point $\rho_* = {\cal E}(\rho_*)$. 
 A channel is described as {\it mixing} if it is ergodic and all orbits
 decay to the fixed point; $\lim_{n\to \infty} {\cal E}^n(\rho)  = \rho_*$. 
 Here ${\cal E}^j(\rho)$ refers to the channel applied iteratively  $j$ times 
 ${\cal E}({\cal E}( ... ({\cal E}(\rho) ))$. 
 
Since a quantum channel is a linear map, it can be described with a matrix representation, 
called the Liouville representation, 
that operates on a vectorized version of an operator in  ${\cal B}({\cal H})$.  
We refer to the matrix representation of the channel $\cal E$ as ${\cal L}_{\cal E}$. 
With $N = {\rm dim} {\cal H}$, 
a linear operator in ${\cal B}({\cal H})$  is an $N \times N$ square matrix which can be written as an $N^2$-dimensional vector. The matrix ${\cal L}_{\cal E}$ is the $N^2\times N^2$ matrix that describes the operation of the channel $\cal E$ on the vectorized version of an operator. 
The matrix ${\cal L}_{\cal E}$ has elements 
${\cal L}_{{\cal E},\alpha \beta} = \tr(F_\alpha^\dagger {\cal E}  (F_\beta) )$ where
$F_\alpha, F_\beta$ are in the set of matrix units 
$ \{ \ket{i}\bra{j}: i,j \in 0, 1,2, ....., N \}$ (following \cite{Wolf_2008}).
The matrix ${\cal L}_{\cal E}$ is in general not  
Hermitian or invertible so it cannot necessarily be diagonalized, however a singular value decomposition can be used to find its eigenvalues and left and right eigenvectors.   Right eigenvectors $R_k$ satisfy ${\cal L}_{\cal E} R_k = \lambda_k R$ for $\lambda_k$  and ${\cal E}(R_k)  = \lambda_k R_k$ for
the associated eigenvalue $\lambda_k$ . 
Right eigenvectors of ${\cal L }_{\cal E}$ need not correspond to physical states;  for example, they need not have trace of 1. 
%Vectorization essentially ?flips? the bra part in each of the outer products making up ? and A is written as a D2×D2 matrix of the form  
%${\cal L} = A?=??lAl?Al acting on the vectorized ? strictly from the left (with A being the complex conjugated A).

The asymptotic subspace of a channel is the set of operators spanned by
the right-eigenvectors of ${\cal L }_{\cal E}$ that have eigenvalue with modulus 1. 
Operators that lie outside of (are perpendicular to) this subspace decay during 
iterations of the channel \citep{Albert_2019}.     
A  `conserved quantity' (following \cite{Albert_2019}) is a left-eigenvector of ${\cal L}_{\cal E}$ with eigenvalue that has modulus 1. 
A conserved quantity  $J$  satisfies ${\cal E}^\ddagger (J) = e^{i\Delta} J$ for real $\Delta$ and  $\tr (J \rho) = e^{i\Delta} \tr  (J {\cal E}(\rho)) $ for any density operator $\rho$.     
If $\Delta$ labels an eigenvalue of modulus 1 and $\mu$ an index that counts degeneracies, then left and right  
eigenvectors of modulus 1 can be constructed so that they are bi-orthogonal;  
$\tr (J^{\Delta \mu} R_{\Delta' \mu'}) = \delta_{\Delta\Delta'}\delta_{\mu\mu'}$ 
\citep{Albert_2019}.

We describe the orbit of a function $f$ associated with initial condition $x_0$
as the set of points $\{ x_0, f(x_0), f^2(x_0), .... \} $. 
The orbit of a channel associated with an initial operator $X$ 
is the set $\{ X, {\cal E}(X), {\cal E}^2(X), {\cal E}^3(X), ..... \}$. 

Operators in  $ {\cal B}({\cal H})$ can be decomposed as a sum of basis elements that are orthogonal with respect to the Hilbert-Schmidt or Frobenius inner product 
$\langle A , B \rangle=\tr(A^\dagger B)$.  The associated norm is  
 called the trace distance $|| A|| =  \sqrt{\tr (A A^\dagger)}$.

\section{Generating iterates of a function on a quantum computer}
\label{sec:main}
 
Consider the quantum circuit illustrated in Figure \ref{fig:it}.
%The circuit operates on two quantum spaces. registers, each with $n$ qubits.  
The full Hilbert space is ${\cal H}_A \otimes {\cal H}_B$ is a tensor product of two spaces.  Here $A$ is the first or top system or register (if it is composed of qubits) and 
$B$ is the second or bottom quantum system or register. 
The dimensions ${\rm dim} {\cal H}_A = {\rm dim } {\cal H}_B = N$.
%We adopt a conventional basis where basis states are specified by an integer;  $\ket{x}$ with $x \in {\mathbb Z}_N$.  
%If the system is comprised of $n$ qubits then $N=2^n$. 
The operator $U_f$ represents a unitary operation commonly called a `quantum oracle'
in the context of quantum black box algorithms where a function 
with hidden properties is queried \citep{Nielsen_2010}.
The oracle performs 
\begin{equation}
U_f : \ket{x}_{\!A}\ket{y}_{\!B} \to \ket{x}_{\!A}\ket{f(x)\! +\! y}_{\!B}  \ \  \ \  x, y \in {\mathbb Z}_N
\label{eqn:oracle}
\end{equation}
and $f()$ is a function from the set  ${\mathbb Z}_N$ 
to the same set.    Here $f(x) + y$ is modulo $N$. 
Equivalently,  
\begin{equation}
U_f = \sum_{x,y=0}^{N} \ket{x}_{\!A} \ket{y  +\! f(x)}_{\!B} \bra{x}_A \bra{y}_B.  \label{eqn:Uf}
\end{equation}

\begin{figure}[ht]
\includegraphics[width = 3 truein]{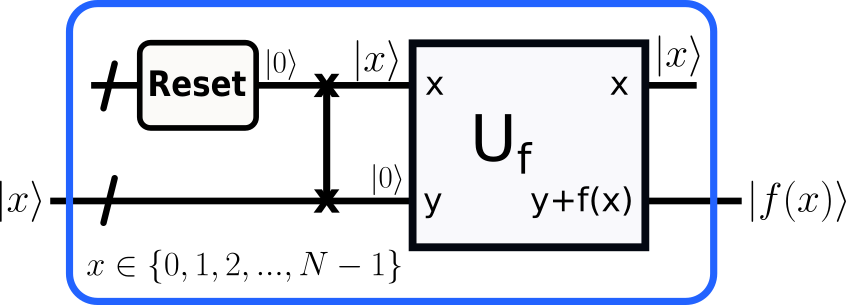}
\caption{This circuit creates a quantum channel operating on the bottom register.
The reset function initializes the top register. Both registers have $N$ states.
The large black box shows a unitary operation known as a quantum oracle (see equation \ref{eqn:oracle}) that depends on a function $f$.  
In the center of the circuit is a swap operation (equation \ref{eqn:swap}). 
The initial state on the bottom register is a pure state $\ket{x}$ 
in the basis labelled  with $x \in \{0, 1, ...,  N-1\}$.  
When iterated, 
the operation within the blue outer box generates iterates of the oracle function $f()$.
 \label{fig:it}
}
\end{figure}

The box labelled {\it Reset} in Figure  \ref{fig:it} initializes 
the top register so that it becomes $\ket{0}^{\otimes n}$.
%In an $n$ qubit system, this would be equivalent to $\ket{0}^n $ which is the tensor product of the $\ket{0}$ state for all qubits.  
In the middle of Figure \ref{fig:it} there is a swap operation which replaces 
the top register with the bottom register and vice versa; 
\begin{equation}
{\rm SWAP} = \sum_{x,y=0}^{N-1} \ket{x}_{\!A}\ket{y}_{\!B} \bra{y}_A\bra{x}_B .\label{eqn:swap}
\end{equation}
The swap operation allows us to have $\ket{x}$ input and oracle output $\ket{f(x)}$ on the same B register.  
For the moment we assume that the input $\ket{x}_{\!B}$ is a pure state in the 
basis labelled by integer index. %so $x \in {\mathbb Z}_N$.  
This ensures that the output state $\ket{x}_{\!A}\otimes \ket{f(x)}_{\!B}$ is not entangled. 

Consider the operation within the blue box.  If the input in the B register 
is a state $\ket{x}_{\!B}$
then the output is the state $\ket{f(x)}_{\!B}$.  If the operation is repeated 
then the output is $\ket{f(f(x))}_{\!B}$.  By calling the operation multiple times
it is possible to create quantum states from iterates of the oracle function.  
As the function need not be invertible, this operation is not necessarily unitary. The reset is used to remove information from the system that would allow the operation to be inverted. 

We consider functions $g$ that map real numbers from the unit interval $[0,1)$ onto to the unit interval. 
If the system  is comprised of $n$ qubits then $N=2^n$. 
The oracle function $f: {\mathbb Z}_N \to {\mathbb Z}_N$.
We can create the oracle function $f$ from $g$ by truncation. 
Following \cite{Miyazaki_2011,Miyazaki_2012}  
\begin{equation}
 f(x) =  {\rm floor} \left[ g\left( \frac{x}{2^n} \right) 2^n \right] \ \ \  {\rm for} \ x \in \{ 0, 1, ..., 2^n\!-\!1\}.  \label{eqn:fg}
\end{equation}
The floor function takes as input a real number $y$, and gives as output the greatest integer less than or equal to $y$.
%The function $f$ returns an integer in $ \{ 0, 1, ..., 2^n\!-\!1\} $. 
Consider the binary fraction given
by the digits of non-negative integer $x \in \{0, 1, 2, ..., 2^n\! -\!1\}$. The state
$\ket{x}$ can also be labelled with its binary digits,
$\ket{x_1 x_2 x_3 x_4 ... x_n}$,   
with $x_j \in \{0,1\}$ and $x = \sum_{j=1}^n x_j 2^{n-j}$.  A number on the 
 the unit interval $y \in [0,1)$ can be created from $x$ via $y=\sum_{j=1}^n x_j 2^{-j}  = x/2^{n}$.
Thus equation \ref{eqn:fg} equivalently gives a map
 from binary fractions to binary fractions. 
 Truncation is essentially equivalent to zeroing binary digits past a particular index. 
Equation \ref{eqn:fg} truncates the function $g$ on the unit interval to the accuracy 
$1/2^n$. % which is set by the number $n$ of qubits in our registers. 

\subsection{Iterates of the truncated logistic map}
\label{sec:trunc}

\begin{figure}[ht]
\includegraphics[width = 3 truein]{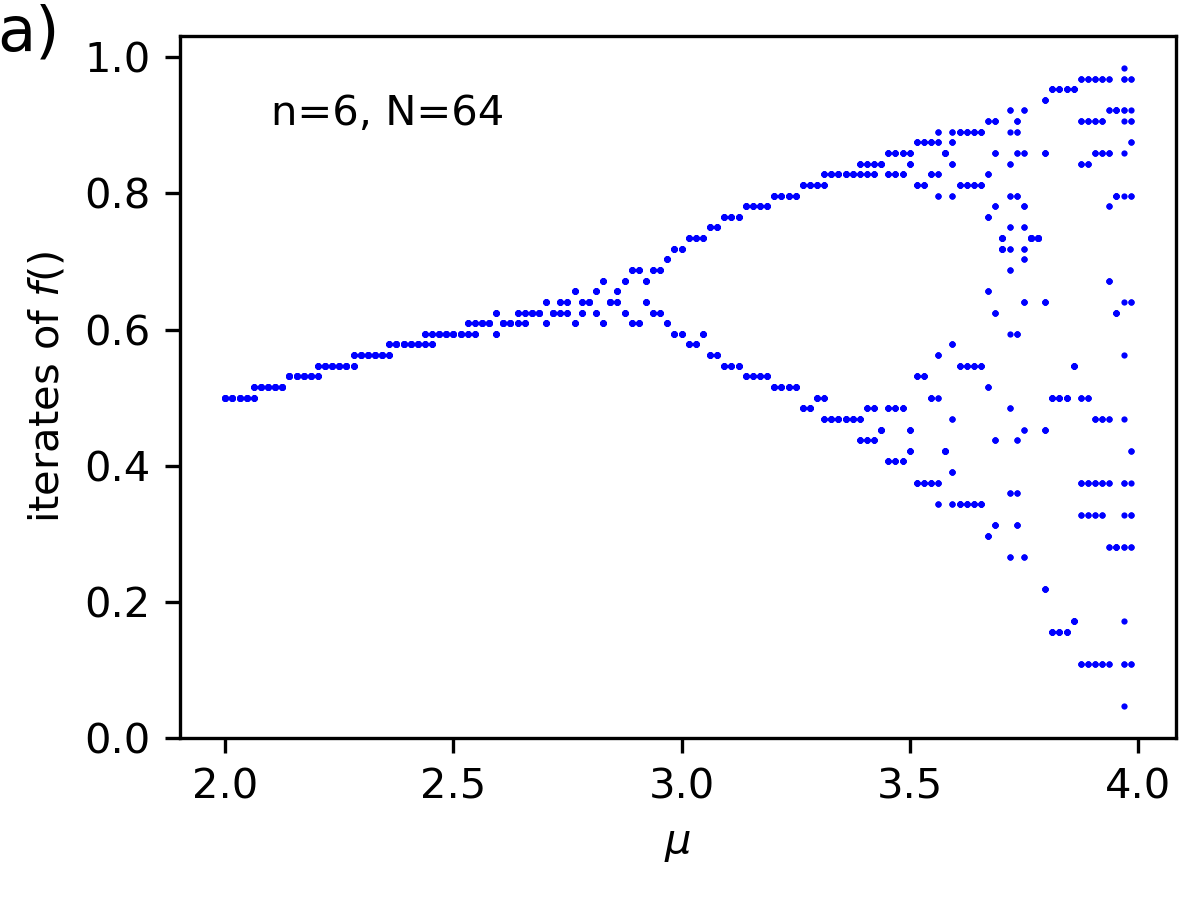}
\includegraphics[width = 3 truein]{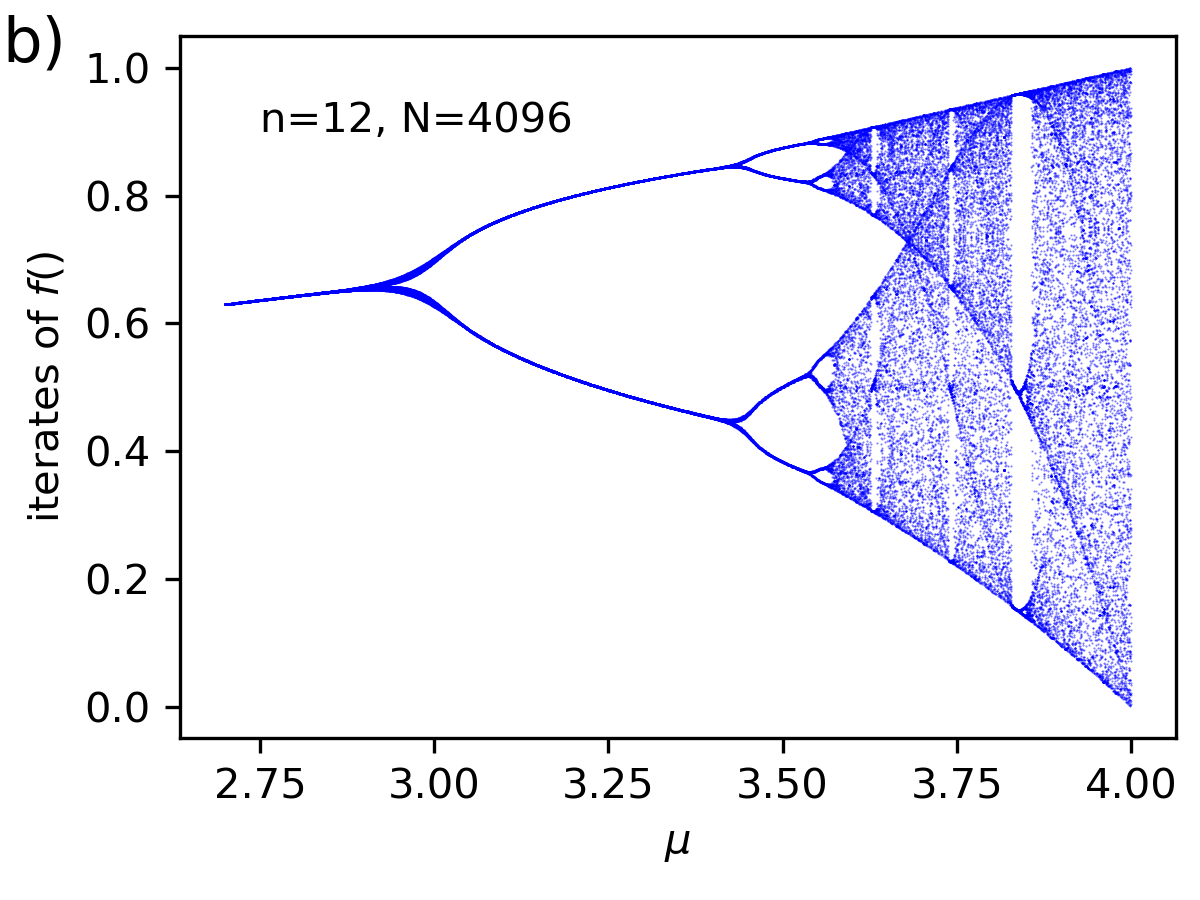}
\caption{The result of iterating the circuit shown in Figure \ref{fig:it}
with a) $n=6$ qubits (top) and b) $n=12$ qubits (bottom).
The oracle function is generated by truncating the logistic map and is 
given by equation \ref{eqn:fg}. For both panels, the initial condition
$x_0 = 2^{n-1}$. 
In a) we iterate 
20 times and then plot 20 iterates.  The x-axis is restricted to $\mu >2$. 
In b) we iterate 30 times and then plot 20 iterates.  The x-axis range is restricted to 
$\mu >2.7$ so as to better show the region with substructure.   
The iterates of the truncated logistic map display many properties of the 
logistic map. 
\label{fig:log}}
\end{figure}

To illustrate potential complexity of a channel created via the circuit
in Figure \ref{fig:it}, we create
quantum channel based on the logistic map.   We chose the logistic map
because its iterates exhibit a range of phenomena \cite{May_1976} and 
because it is based on a quadratic function and can be computed
efficiently with addition and multiplication routines for integers on qubit systems 
(see section 6.4.5 by \cite{Rieffel_2011} and \cite{Seidel_2022}). 
The logistic map is the function 
\begin{equation}
g(x) = \mu x(1 -x) \qquad {\rm for} \ \  x \in [0,1]. \label{eqn:logistic}
\end{equation}
The logistic map is sensitive to the parameter $\mu \in [0, 4]$
with range restricted so that iterates of the function remain in the unit interval. 

To create the associated oracle function $f$ via equation 
\ref{eqn:fg}, we restrict $\mu$ to binary fractions with the
same resolution as available on the quantum register used to store $x$.  
In other words we chose $\mu$ so that $2^n\mu$ is an integer.  

In Figure \ref{fig:log} we show the result of iterating the oracle
with oracle function generated by truncating the logistic map. 
In Figure \ref{fig:log}a we show iterates $20$ through 40 for $n=6$ qubits (and $N=2^n=64$ states).
In Figure \ref{fig:log}b we show iterates $30$ through 50 with $n=12$ qubits (corresponding to $2^n = 4096$ states). For both panels, the initial condition
$x_0 = 2^{n-1}$. 
The x axes show $\mu$, controlling the logistic map and its truncated version. 
The y axes show iterate values $f(x)/2^n$ so they lie within the unit interval. 
For both figures the initial condition is given by $x_0 = 2^{n-1}$. 

%The set of iterates  $\{ x_0, f(x_0)$, $f^3(x_0)$,   $f(f(x_0))$, ...,  $ f^k(x_0), ... \} $  of a map generated from an initial condition  $x_0$  are called an orbit. 
The orbits of the logistic map are insensitive to the initial condition
(excepting a small set  which includes the initial condition $x=0$).
For $\mu < 1$  all orbits approach $x=0$. For $1 < \mu < 3$ there is 
a single attracting fixed point.  At $\mu = 3$ there is a bifurcation
giving an attracting period 2 orbit for $3.0 < \mu \lesssim 3.45$. 
Subsequent bifurcations show
the phenomenon known as {\it period doubling}. 
Despite a low number of qubits, the structure of the logistic map is present
in the iterates shown in Figure \ref{fig:it}, of the truncated function of equation \ref{eqn:fg}.
 
In Figure \ref{fig:log}a for $\mu$ near but below 3, the distributions of iterates 
is wider than at lower values of $\mu$. For each value of $\mu$, the iterates are close
together but cycle between nearby values.   When an orbit is periodic,
we refer to it as a cycle.  
Maps can have computationally determined periods that are dependent on the precision of
the arithmetic \citep{Chia_1991,Miyazaki_2012}.  We see this phenomenon here
as iterates in the truncated map can cycle between nearby states
for the same value of $\mu$ giving iterates in the original map that approach a fixed point. 
For the logistic map, if $\mu$ gives a periodic attractor, orbits
from almost all initial conditions 
converge on to it.  However, for the truncated map,  at a given $\mu$, 
almost all initial conditions will give orbits that enter periodic cycles with the same period,
however the points in these different cycles may not all be identical. 

The sensitivity to initial condition of the period $p$ of a cycle that an orbit enters 
is illustrated in more detail in Figure \ref{fig:combos}. 
We compute iterates of the oracle function (the orbits) 
for different values of initial condition $x_0$ and control variable $\mu$. 
All orbits eventually enter a periodic cycle. The index of
the iteration where the cycle is entered we call $k_c$. 
This number is referred to as the {\it link length} by \cite{Miyazaki_2012} and defined as 
\begin{equation}
k_c(x_0) = {\rm min} \{k | f^k(x_0) = f^{k+p} (x_0) {\rm\  for \ } k \ge 0\}.  \label{eqn:k_c}
\end{equation}
where $p$ is the period of the cycle that is entered. 
The median period
of the cycle for different initial conditions is shown as a function of $\mu$
in the top panels in Figure \ref{fig:combos}.   The period of the cycle that
an orbit enters is shown  
shown as an image and as a function of $\mu$ (on the x-axis) and initial condition
$x_0$ on the y-axis in the second panels. 
In the third panels we plot the iterates after the cycle has been entered and 
with color set by the cycle period.  
The link length, or iteration number when 
the cycle is entered, $k_c$ is shown as an image in the bottom panels. 

Due to truncation, %(and for all initial conditions except $x=0$), 
the attracting orbits of $f$  either remain in a single state or are periodic, 
even for values of $\mu$ that would give chaotic orbits
in the logistic map.  For values of $\mu$ that give chaotic orbits in the logistic map,
the truncated map $f$ could have a long period, with period
 sensitive to the truncation level, which is set by the number of qubits in the quantum  registers. 
Figure \ref{fig:combos}b shows that for $n=7$ all orbits have entered a cycle
within only 30 iterations.  The maximum cycle period is 23.  This suggests that 
both the maximum period and the maximum
number of iterations required to enter a cycle 
scale logarithmically with the total number of states; $N=2^n$. 
While we find that period 3 orbits do not exist for $n\le 5$, they do exist for $n>5$. 
As the logistic map can exhibit  cycles of any period,  
it is likely that with sufficient numbers of bits, it is possible
to find a value of control parameter $\mu$ and an initial condition
that gives an orbit of the truncated logistic function 
that converges onto a cycle with any desired period. 

In the logistic map, past $\mu=3$, period 1 orbits (fixed points) are unstable. 
However the bottom panels of Figure \ref{fig:combos} show
that for $3 < \mu \lesssim 3.5$, orbits for most initial conditions enter
period 2 cycles, however a few initial conditions give orbits that enter period 1 cycles (the same thing as fixed points). 
The truncated map can have stable and attracting fixed points that are not present in the logistic map. 
 
In this section we have numerically illustrated that the orbits of the truncated logistic map 
have remarkable complexity, even when the level of truncation is fairly large. 
The truncated logistic map could be implemented on a quantum computer
using a modular multiplication circuit (e.g., \cite{Seidel_2022}), 
that is constructed of gates that operate on only a few qubits, 
such as the controlled NOT, NOT and Toffoli gates.
So if the map $f$ is generated from simple arithmetic operations
(as is the logistic map), 
then it is straight-forward to implement on a quantum computer. 
%Here, the strength of the quantum computer is not in its ability to exhibit entanglement or superposition, but for a few qubit system to have a large number of discrete states. 
The complexity of maps such as the truncated logistic map could add to 
the repertoire of circuits that can be performed on a quantum computer. 
 
 \begin{figure*}[ht]
\includegraphics[width = 3 truein]{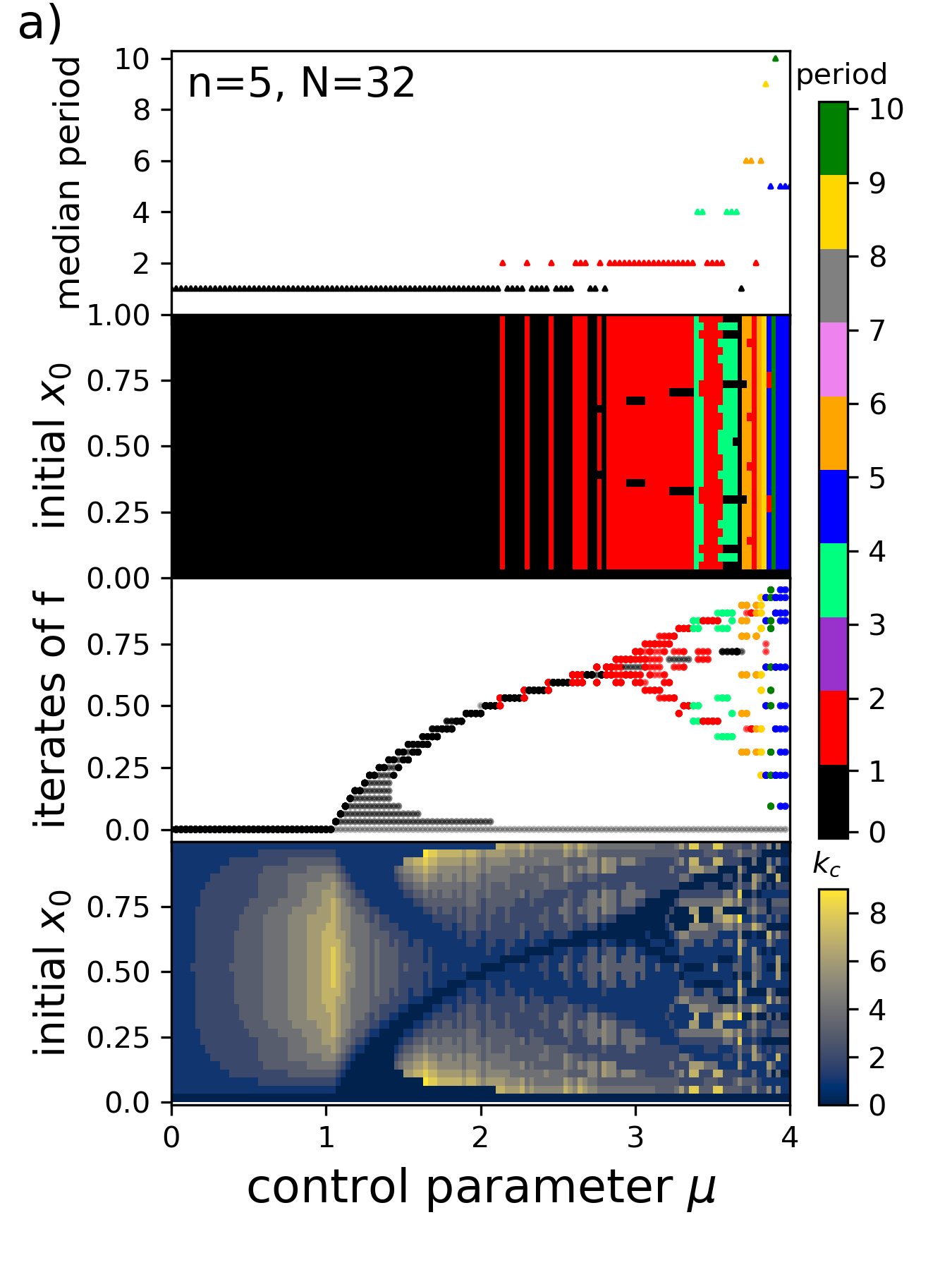}
\includegraphics[width = 3 truein]{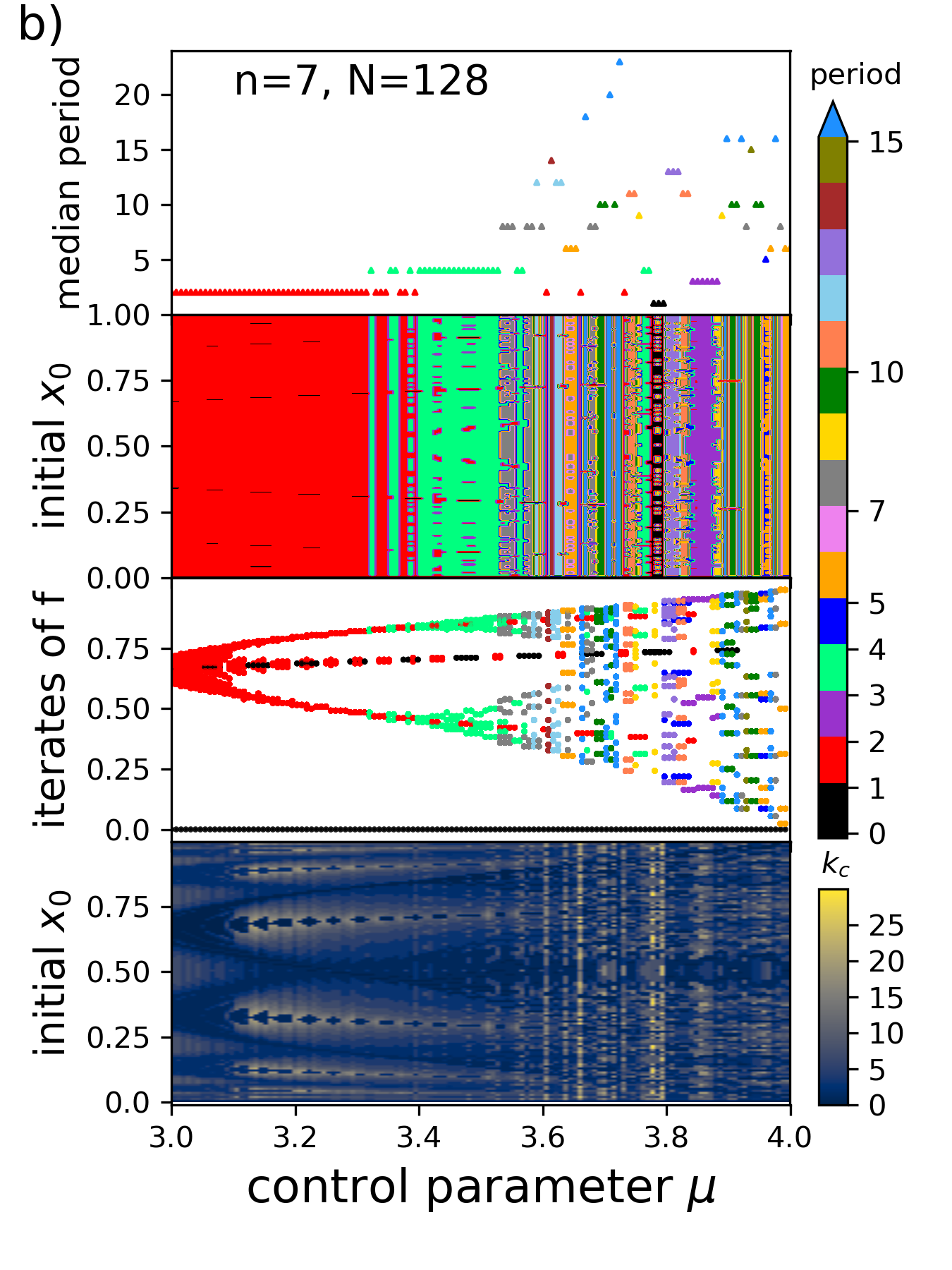}
\caption{The result of iterating the circuit shown in Figure \ref{fig:it}
with a) $n=5$ qubits and using a truncated version of the logistic map as oracle function 
(equation \ref{eqn:fg}).
We compute orbits for different integer initial conditions $x_0$ and values of $\mu$ the control parameter for the oracle function.   
In the top panel we show a median period
as a function of $\mu$, on the x axis. The median is computed from the set
of attracting cycles averaged over the possible initial values $x_0$.
In the second panel from top, we show the period of the attracting
cycle as a function of  initial condition $x_0$ (on the y-axis) and  
control parameter $\mu$ on the x-axis.
In the third panel, the iterates are plotted as a function
of $\mu$, with color dependent on the period of the attracting cycle that the orbit
enters.    
We only plot iterates after the attracting cycle has been entered. 
In the bottom panel, the image shows the link length $k_c$, or iteration number where the cycle was entered, as a function of initial condition
and control parameter. 
b) Similar to a) except we show the case with $n=7$ qubits and we restrict
the $x$ axis so $\mu\ge 3$. 
\label{fig:combos}}
\end{figure*}

\subsection{The reset quantum channel}
\label{sec:reset}

In Figure \ref{fig:it} we assumed that the input to the bottom register
was a pure state $\ket{x}$ in the  basis with $x \in {\mathbb Z}_N$.
What happens if the input bottom register is a superposition or a mixed state or if
the two registers are entangled?
%The answer to this question depends on the nature of the reset or initialization function when the two registers are entangled. 

%We label the first register A and the second register B. 
The reset operation appears to operate on the A register. %, so on ${\cal H}_A$. 
However, we need to specify how the initialization affects the full system ${\cal H}_A \otimes {\cal H}_B$.
The density operator describing both registers is $\rho_{AB}$.
The density operator for the lower register is  
\begin{equation} 
\rho_B = \tr_A \rho_{AB}. \label{eqn:rhoB}
\end{equation} 

The reset or initialization function should act like a generalized or positive operator value measurement (POVM); (e.g., \cite{Kranz_2019,Basilewitsch_2021,Johnson_2022,Volya_2023}). 
For example, suppose we measure a single qubit via a measurement
associated with the Pauli $Z$ operator.
If the measurement gives state $\ket{0}$, we do nothing, corresponding
the identity operation, but if we measure the system
to be in the state $\ket{1}$, we then apply the Pauli X operator, putting the
system into the $\ket{0}$ state.  This procedure resets the qubit and is 
 described by a POVM measurement with two Kraus operators 
%\begin{align}
$K_0 = \ket{0}\!\bra{0}$, $ K_1 = \ket{0}\!\bra{1}$.
%\end{align}
%The second Kraus operator $K_1$ is a lowering operator. 
%A quantum channel $\cal E$ describing the reset of the single qubit is constructed from the Kraus operators by  mapping a density operator to another density operator $\rho \to \rho'$ with  $$\rho' = {\cal E} (\rho)  = K_0 \rho K_0^\dagger + K_1 \rho K_1^\dagger .$$ 
%In an n-qubit quantum system, reset of all qubits would be described
%with a POVM that is described by
%$2^n$ Kraus operators; $K_j  =\ket{0}\bra{j}$, with $j \in \{0, 1, ..., 2^n-1}$.  
In a larger bipartite quantum system, resetting one subsystem 
is described by a quantum channel that is given by the set of $N$ Kraus operators; 
 \begin{equation}
 K_j  =\ket{0}_{\!A}\!\bra{j}_{\!A} \otimes I_B \qquad {\rm with\ }  j \in {\mathbb Z}_N. \label{eqn:K_j}
 \end{equation}
%So that the operation preserves the trace of an operator, the set of Kraus operators must be complete. 
The set $\{ K_j \}$ satisfies $\sum_{j=0}^{N} K_j^\dagger K_j = I_{AB}$,
following the definition for a POVM, ensuring that the operation preserves the trace. % (e.g., \cite{Nielsen_2010}). 
The set $\{ K_j \}$ of Kraus operators give an operator sum (or Choi) representation for the reset operation.   In other words, 
the quantum channel ${\cal E}_{\rm reset}$ describing the reset is the 
 completely positive and trace preserving map 
 $\rho_{AB} \to \rho_{AB}'$:
 \begin{align}
 \rho'_{AB} & = {\cal E}_{\rm reset}(\rho_{AB})  = \sum_{j=0}^{N} K_j \rho_{AB} K_j^\dagger.  %\nonumber \\
  \end{align}
Using equation \ref{eqn:K_j} for the Kraus operators we find that 
  \begin{align}
 {\cal E}_{\rm reset}(\rho_{AB}) &= \ket{0}_{\!A}\!\bra{0}_{A} \otimes \rho_B  \label{eqn:reset}
 \end{align}
 with $\rho_B $ given by equation \ref{eqn:rhoB}.
 %With equation \ref{eqn:K_j} this gives 
 %\begin{align}
%   \rho'_{AB} = \ket{0}\bra{0} \otimes \rho_B  \label{eqn:reset}
% \end{align}
%  with $\rho_B = \tr_A \rho_{AB}$. 

It is a general principle that 
a POVM measurement in one subspace does not
affect the traced density matrix of another subspace (e.g, \cite{Nielsen_2010}). 
The reset channel described in 
equation \ref{eqn:reset} is consistent with this but also shows
that for an initial state that is entangled, after the reset is applied,
the resulting state is no longer entangled. 
Thus the reset channel is an example of an {\it entanglement breaking}
channel (e.g., section II.F in the review \cite{Caruso_2014}). 
If we only consider how the reset operation affects the A register 
then the reset channel maps $\rho_A \to  \ket{0}_{\!A} \bra{0}_A$
and is in the form of equation \ref{eqn:init}.

\subsection{Properties of the oracle channel}
\label{sec:oracle_chan}

In Figure \ref{fig:itb} we show a modification of 
the circuit of Figure \ref{fig:it}.  The reset is applied after
the oracle and drawn so that it shows more clearly that 
the reset operation ${\cal E}_{\rm reset}$ affects both registers
when they are entangled.    The set of operations in Figure \ref{fig:it} is 
a quantum channel ${ \cal E}_f$ on the second register, taking
$\rho_B \to \rho_B'$ via $\rho_B' ={ \cal E}_f( \rho_B)$.
We can call this an oracle channel as it is based on a quantum 
oracle operation. 

\begin{figure}[ht]
\includegraphics[width = 3 truein]{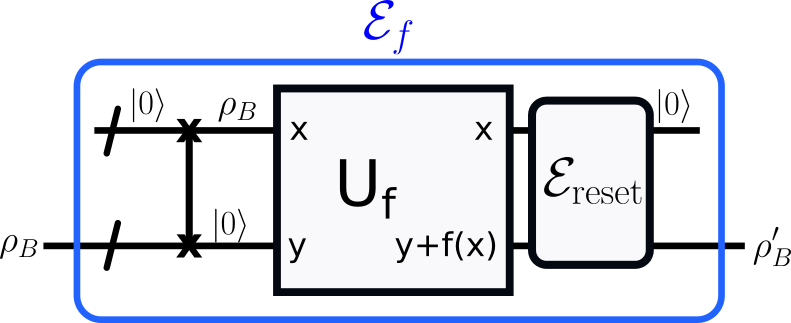}
\caption{This circuit creates a quantum channel ${\cal E}_f$ operating on the bottom register.
This circuit is similar to that shown in Figure \ref{fig:it} except we do not restrict the input
of the bottom register to be a basis state and the reset is done after the oracle instead of before. The input of the bottom register is described with density operator $\rho_B$.
The output in the bottom register is $\rho_B' = {\cal E}_f(\rho_B)$.   The reset operator, ${\cal E}_{\rm reset}$, 
initializes the top register and breaks the entanglement between the registers. 
 \label{fig:itb}
}
\end{figure}

Suppose we input a superposition state to the circuit shown in Figure \ref{fig:itb};
\begin{align}
\ket{\psi_0}_{\!AB} &= \ket{0}_{\!A} \otimes (a \ket{x}_{\!B}+ b \ket{y}_{\!B}) 
\end{align}
with $x\ne y$. The density matrix of the
second register is initially $\rho_B = \tr_A (\ket{\psi_0}_{AB}\!\bra{\psi_0}_{AB})$. 
After applying the swap and oracle, and prior to the reset, 
the density matrix becomes, 
% \rho_{AB,0} &= \ket{\psi_0}\bra{\psi_0} \\
%U_f \ket{\psi_0} &= a\ket{x}\ket{f(x)} + b \ket{y} \ket{f(y)}  \\
\begin{align}
\rho_{AB,1} & = U_f  \ket{\psi_0}_{\!AB}  \bra{\psi_0}_{AB} U_f^\dagger  \nonumber \\
& = (a \ket{x}_{\!A} \ket{f(x)}_{\!B} + b  \ket{y}_{\!A} \ket{f(y)}_{\!B}) \times \nonumber \\
& \qquad ( a^* \bra{x}_A \bra{f(x)}_B + b^*  \bra{y}_A \bra{f(y)}_B).
\end{align}
We apply the reset (equation \ref{eqn:reset})
\begin{align}
 {\cal E}_{\rm reset}(\rho_{AB,1}) & = \ket{0}_{\!A}\!\bra{0}_A \otimes \rho_B',
\end{align}
where the output density operator 
\begin{align}
\rho_B' 
& = { \cal E}_f (\rho_B) %\nonumber\\ &
 = \tr_B \rho_{AB,1}   \nonumber\\
& = aa^* \ket{f(x)}_{\!B} \!\bra{f(x)}_B  + bb^*\ket{f(y)}_{\!B}\! \bra{f(y)}_B . \label{eqn:oo}
\end{align}
If $f(x) =f(y)$ then the result $\rho_B' =\ket{f(x)}_{\!B}\!\bra{f(x)}_B$
due to normalization of $\ket{\psi_0}_{\!AB}$. 
Otherwise, with $f(x) \ne f(y)$, 
the output density operator, shown in equation \ref{eqn:oo}, 
is a mixed state composed of two pure states. 
This implies that ${\cal E}_f( \ket{i}_{\!B} \!\bra{j}_B) = 0$ for $i\ne j$. 
The channel is dephasing as it zeros off-diagonal elements of an operator. 

%Suppose the density operator in the second register is initially a mixed state 
%\begin{equation}
%\rho_{B,0} =\sum_j p_j \ket{x_j}\bra{x_j}  \label{eqn:mix}
%\end{equation}
%where $\{ p_j \} $ are a set probabilities (non-negative real numbers that sum to 1) and where $\ket{x_j}$ are basis states in the conventional basis. 
%We apply the channel described by Figure \ref{fig:itb} 
%\begin{equation}
%{\cal E}_f( \rho_{B,0}) =\sum_j p_j \ket{f(x_j)}\bra{f(x_j)}.
%\end{equation}
%Iterating the channel $k$ times gives 
%\begin{equation}
%{\cal E}_f^k( \rho_{B,0}) =\sum_j p_j \ket{f^k(x_j)}\bra{f^k(x_j)}
%\end{equation}
%which is a mixture of pure states comprised of iterates of the oracle function. 

Using the Kraus operators for the reset channel 
(listed in equation \ref{eqn:K_j}), the oracle function (equation 
\ref{eqn:Uf}),
the swap operation (equation \ref{eqn:swap}), and setting the initial top register  
to $\ket{0}$, we can construct a complete set of Kraus operators for ${\cal E}_f$,  the 
channel for the whole operator in the large blue box in Figure \ref{fig:itb};
\begin{align}
{\tilde K}_j &= \ket{f(j)}_{\!B} \!\bra{j}_B  \qquad 
{\rm for \ } j \in {\mathbb Z}_N . 
\end{align}
The channel ${\cal E}_f $ is described by the map  $\rho_B \to \rho_B'$ with 
\begin{align}
\rho'_B = {\cal E}_f(\rho_B) =\sum_{j=0}^{N-1} {\tilde K}_j \rho_B {\tilde K}_j^\dagger.
\end{align}
The channel ${\cal E}_f $ is completely dephasing as it zeros all off-diagonal elements in
the density matrix. 
Application of the channel $k$ times to a diagonal mixed state $\rho = \sum_i p_i \ket{i}_B\bra{i}_B$ with $p_i$ a set of probabilities gives 
\begin{align}
{\cal E}_f^k \left(\sum_i p_i \ket{i}_{\!B}\!\bra{i}_B\right) 
   = \sum_i p_i \ket{f^k(i)}_{\!B} \! \bra{f^k(i)}_B. 
\end{align}
The channel iterates the function $f$, but essentially behaves
classically as it preserves the set of probabilities from the diagonal elements of the density operator. 

%\subsection{Creating superpositions}

The oracle channel shown in Figure \ref{fig:itb} does not output superpositions 
of states in the B register,  
 however we can make other types of superpositions.
We can consider a function $f^k(\mu,x)$ where $\mu$ is the control
parameter and $k$ is the iteration number. 
For example we could create superposition states like $\sum_\mu \ket{\mu}\ket{f(\mu,x)}$
or $\sum_k \ket{k} \ket{f^k(\mu,x)}$.  As the reset operation only takes
the trace over the $\ket{x}$ degrees of freedom,  superpositions of states labelled by the control parameter $\mu$ or iteration number would remain intact. 

%\section{Examples of iterative classical operations}

\subsection{Euclidean algorithm for finding the greatest common divisor}
\label{sec:Euclid}

The Euclidean algorithm for finding the greatest common divisor is a component
of the Shor factoring algorithm \citep{Shor_1999}.  If there is a penalty to transferring information
between classical and quantum computing architectures (error associated with measurements; e.g., \cite{Chen_2023}), 
there could be an advantage to performing this
operation, even though it is a classical one \citep{Callison_2022}, on a quantum computer.

The Euclidean algorithm for finding the greatest common divisor can be done
iteratively with a function that takes two integers  
%We create a function of two integers $g_{gcd}(x,y)$,  that returns
%two integers and is meant to be performed iteratively; 
\begin{align}
g_{gcd}(a,b) = & \begin{cases}
%\left(y, x - \left\lfloor \frac{x}{y} \right\rfloor y \right) & {\rm\  if \ } y>1  \\
(b, a {\rm\ mod}\ b ) & {\rm\  if \ } b>1  \\
%(1,0)  & {\rm\  if \ } y=1  \\
(a,0) & {\rm\  if \ }  b=0  \\
\end{cases}. \label{eqn:gcd}
%  \nonumber \\
%= &   {\rm\  if \ } y=0 \ \ {\rm endstate\ with\ } x = gcd \nonumber \\
%=& \left(y, x - \left\lfloor \frac{x}{y} \right\rfloor y \right)  {\rm\  if \ } y>1 .
\end{align}
%The function iterates until it reaches one of two endstates (or fixed points).  
If an orbit is initialized with $(a,b)$,  and $a>b$ then if $a,b$ are relatively prime,  iterates of the function eventually reach a fixed point $(1,0)$.
Otherwise, the end state is $(x,0)$ with $x$ equal to the greatest common divisor of $a,b$. 
The function need only be iterated $O(\ln(b))$ times to reach an end-state.
%Note the addition mod $(a+b)\ {\rm mod} \ N$ circuit by Vedral+95 is only for both $a,b<N$ so we can't use it. 
%With initial condition $a,b$ the orbit of the function (when iterated) reaches an end-state (or a fixed point) of (1,0) if $a,b$ are relatively prime and $(gcd(a,b),0)$ if they are not.  
If an oracle operation is constructed to perform this function, then 
it would be possible to compute
the classical Euclidean algorithm on the quantum computer, using  
 the circuit shown in Figure \ref{fig:itb}.

For positive integers $a,b$, 
given $a \ {\rm mod} \ b $ and $a$ it is not possible to uniquely find $b$.   
Given $a \ {\rm mod} \ b $ and $b$ it is not possible to uniquely find $a$. 
Consequently functions $(a, b) \to (a \ {\rm mod} \ b, b)$ and
$(a, b) \to (a, a \ {\rm mod} \ b)$ are not invertible. 
However we can create a unitary transformation that
carries out a modulo function with divisor as argument with 3 quantum registers 
 \begin{equation}
U_{Mo} : \ \ \ket{a}\ket{b}\ket{c} \to  
\begin{cases} \ket{a}\ket{b} \ket{a \ {\rm mod}\ b + c} \ {\rm if \ } b \ne 0 \\
\ket{a}\ket{b} \ket{a  + c} \ {\rm if \ } b = 0 
\end{cases}.
\label{eqn:UMo}
\end{equation}
%This is invertible.  
Here $(a \ {\rm mod}\ b + c)$ is assumed to be modulo $N$ with $N$
the dimension of each of the three quantum registers.  
This unitary operation is a quantum oracle, but the function has two arguments 
instead of one. 

Using $U_{Mo}$, the greatest common divisor of two integers can be 
computed iteratively using the channel ${\cal E}_{gcd}$ shown in Figure \ref{fig:gcd}.
The unitary operation $U_{Mo}$ is combined with two swap operations
and a reset of the bottom register.
 The channel computes 
 \begin{align}
 {\cal E}_{gcd}(&\ket{a}_{\!A}\! \bra{a}_{\!A}\! \otimes\!  \ket{b}_{\!B} \!\bra{b}_B) 
=    \\
&\begin{cases}
  \ket{b}_{\!A}\!\bra{b}_A\!  \otimes\!  \ket{a\ {\rm mod}\ b}_{\!B}\! \bra{a\ {\rm mod\ } b}_B  & 
  {\rm  if \ } b \ne 0  \\
 \ket{a}_{\!A} \!\bra{a}_A \!\otimes \! \ket{0}_{\!B} \!\bra{0}_B & {\rm if \ }  b = 0 \\
 \end{cases}.\nonumber
 \end{align}
% if $b \ne 0$ and gives the identity otherwise. 
 Again the channel is completely dephasing, so off-diagonal elements are zeroed. 

When initialized with $\ket{a}\ket{b}$, for $a >b>0$ and then ${\cal E}_{gcd}$ is called iteratively, the system 
reaches an end-state with top register equal to the greatest common
 divisor  of $a,b$; 
 \begin{align}
 \lim_{k \to \infty} {\cal E}_{gcd}^k(\ket{a}_{\!A}\!\bra{a}_A\! \otimes\! \ket{b}_{\!B}\!\bra{b}_B)
  =&  \ket{ {\rm gcd}(a,b) }_{\!A}\! \bra{ {\rm gcd}(a,b) }_A \nonumber \\
  & \otimes\! \ket{0}_{\!B}\!\bra{0}_B.
 \end{align}
 %In other words given initial state 
% $\rho_{0,AB} = \ket{a}_A\!\bra{a}_A\! \otimes \! \ket{b}_B\!\bra{b}_B$, % \! \otimes \!  \ket{0}_C\!\bra{0}_C$
 %with $N>a>b$,  
% When initialized with 
 %\begin{align}
%\tr_{BC} \left( \lim_{k \to \infty} {\cal E}_{gcd}^k(\rho_{0,ABC}) \right) = \ket{ {\rm gcd}(a,b) }_A \bra{ {\rm gcd}(a,b) }_A  .
%\end{align}
%The channel is again completely dephasing.  
%{\cal E}_{gcd}(\ket{a}_A\bra{a}_A) = 
Because the Euclidean algorithm reaches an end-state in at most $O(\ln b)$ iterations,
the channel reaches its end-state with iteration number $k =  O(\ln b)$.
The complexity of the algorithm rests upon how many gates are needed to create 
 the unitary operator $U_{Mo}$ of equation \ref{eqn:UMo}.  
Computation of the greatest common divisor could be done with serial subtraction
and conventional arithmetic quantum circuits.  % Perhaps a more efficient quantum arithmetic algorithm could be devised. 
Iteration of the channel does not significantly
add to the computational complexity of computing the quantum oracle function itself.
Implementation of this channel would not significantly add to the complexity of Shor's algorithm itself.   %In other words, if Shor's algorithm can be efficiently implemented for large qubit systems, probably the greatest common divisor could also be computed on the quantum computer as part of the algorithm itself. 
Future implementations of Shor's algorithm on large qubit systems could include an iterative computation of the greatest common divisor on the quantum computer.
It is amusing to reflect that we have illustrated a possible implementation of a 2300 year old classical algorithm on a quantum computer.  

\begin{figure}[ht]\centering
\includegraphics[width=3truein]{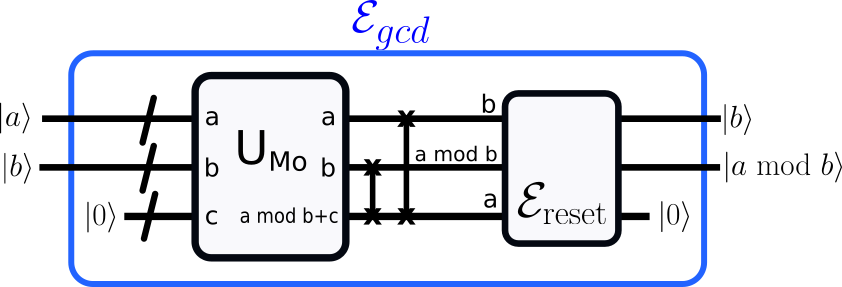}
\caption{A circuit to compute the greatest common divisor 
of two integers via a quantum channel.
This quantum operation performs the function shown in 
equation \ref{eqn:gcd}.  The unitary operation $U_{Mo}$ is that of equation \ref{eqn:UMo}.
The reset operation resets the bottom register. 
When initialized with  $\ket{a}\ket{b}$, for $a >b$,  
and ${\cal E}_{gcd} $ is called iteratively, 
the system reaches a fixed state with  top register equal to $\ket{{\rm gcd}(a,b)}$. 
\label{fig:gcd}
}
\end{figure}

\section{Quantum channels constructed from functions operating on discrete sets}

The channels introduced in section \ref{sec:main} are completely dephasing
in the sense that they zero all off-diagonal elements of the density operator. 
We can create a quantum channel based on a function $f: {\mathbb Z}_N \to {\mathbb Z}_N$ that is somewhat more flexible by 
using disjoint subsets of ${\mathbb Z}_N$.   This allows us to create
a channel that has fewer than $N$ Kraus operators and  
that can preserve superposition or coherence within
certain subspaces. % and can still exhibit cycles. 
Implementation of the channel involves initializing a smaller quantum space  
and so reduces the information lost.  

\subsection{A definition for a discrete function-generated channel that is based
on disjoint sets}

From a discrete function and a collection of disjoint sets,  we define a particular quantum channel. 

\begin{definition} \label{def:chan}
Consider a function  $f: {\mathbb Z}_N \to {\mathbb Z}_N$ and a collection
of disjoint sets 
$S_D = \{ S_j \}$ with $j \in 0, 1 ..., n_K-1$.  Each set contains integers in ${\mathbb Z}_N$ and is labelled with an integer. We require the sets to be disjoint $S_j\cap S_k = \emptyset$ for all $j \ne k$,   have  union 
 $\cup_j S_j = {\mathbb Z}_N$ and  also satisfy   
 $|f(S_j)| = |S_j|\ \forall S_j$.  Here    
$f(S_j)$ refers to the set $f(S_j) = \{ f(k) :  k\in S_j \}$,  
and $n_K = |S_D|$ is the number of disjoint sets. 

We construct a quantum channel ${\cal E}_{S_D, f}$ that operates
on a finite Hilbert space of dimension $N$.  The channel is described
with the set of $n_K$ Kraus operators $\{ K_j \}$ 
\begin{align}
K_j  = \sum_{i \in S_j} \ket{f(i)}\!\bra{i} .  \label{eqn:ES}
\end{align}

\end{definition}

%which we denote $n_K$.   %For the channel generated from a collection of disjoints as defined above, the Kraus rank is equal to the number of disjoint sets $n_K = |S_D|$. 
 
To ensure that the resulting channel is a trace-preserving, completely positive map,  we check that the preceding definition gives a complete
set of Kraus operators,
\begin{align*}
%K_j^\dagger &= \sum_{i \in S_j} \ket{i}\bra{f(i)} \\
K_j^\dagger K_j  &= \sum_{i,k \in S_j} \ket{i}\!\bra{f(i)} \ket{f(k)}\!\bra{k} \\
& = \sum_{k \in S_j}  \ket{k}\!\bra{k}  + \sum_{i, k\in S_j; i \ne k, f(i) = f(k) } \ket{i}\!\bra{k}.
\end{align*}
The term on the right is zero via the pigeon hole principle because $|f(S_j)| = |S_j|$ (which is part of our definition).
Thus 
\begin{align*}
\sum_j K_j^\dagger K_j = \sum_j \sum_{k\in S_j} \ket{k}\!\bra{k} = I
\end{align*}
%equal to the identity 
because the sets are disjoint and have union equal to ${\mathbb Z}_N$, and
$\ket{k}$ with $k \in {\mathbb Z}_N$ are an orthonormal basis for our quantum space. 

\subsection{Realizing the channel on a quantum computer} 
\label{sec:realize}

We describe how we can modify the circuit
discussed previously (shown in Figure \ref{fig:itb})  so that we can construct 
 a quantum circuit that performs the channel with
Kraus operators defined in equation \ref{eqn:ES}, based on 
a function $f$ and a collection of subsets $S_D$ defined as in definition \ref{def:chan}.
Instead of using a register that has number of states $N$, % %with $n$ qubits with $N=2^n$),  
we use an additional register that has $n_K$ states equal to the number of disjoint sets which is equal to the Kraus rank of the channel. 
We denote the additional register as the S-register. The register for the channel
itself is again the B-register.  The circuit is outlined in Figure \ref{fig:gen}.

Because the sets $S_j$ are disjoint, they are associated with a collection
of orthogonal projection operators  $\{ P_i \}$, one for each disjoint set
\begin{align}
P_j = \sum_{k\in S_j} \ket{k}\!\bra{k}. \label{eqn:Pj}
\end{align}
 Because the union of the subsets
span the quantum vector space and $\sum_j P_j = I$, the set $\{ P_j \} $ gives a  projective measurement. 
For each subset $S_j$, we 
 could assign a single qubit in the S register and use
controlled operations to  mark specific states in the B register.  Then controlled gates can be used to carry out operations on the B register that depend upon the state in the S register.  It is more efficient in terms of space (the needed number of states in the S register)
to assign each subset to a single state in the S register rather than a single qubit in this register. 

We construct a unitary operator to mark states in the B register 
according to which subset in $S_D$ they belong to
\begin{align}
V_{S_D}:  &  \ket{i}_{\!S} \ket{x}_{\!B} \to \ket{j(x)+ i }_{\!S}\ket{x}_{\!B} \nonumber 
\end{align}
where $j(x)$ is the index $j$ such that $x \in S_j$.
Here $x \in \mathbb{Z}_N$ and $i$ is an integer between 0 and $n_K-1$, and  the sum $j(x)+i$ is modulo $n_K$ as the first register has dimension  $n_K$.  
This operation  resembles implementation of an oracle with function $j(x)$.
Equivalently using equation \ref{eqn:Pj}
\begin{align}
V_{S_D} & = \sum_{i,j \in \{ 0,..,n_K-1\}} \ket{j+i}_{\!S}\! \bra{i}_S \otimes P_j . \label{eqn:Vj}
\end{align}
%{\bf   clarify!!!!!}

As the sets satisfy  $|S_j| = |f(S_j)|$, if we only consider elements of
 a single set $S_j$,  the function $f$ performs part of a permutation
 of ${\mathbb Z}_N$.    If the particular subset $S_j$ is identified such that  $x\in S_j$, 
 then $f(x)$ can be inverted.   For each subset $S_j$ we construct a permutation $\pi_j$
  (a bijective map)
 of ${\mathbb Z}_N$ such that $\pi_i(x) = f(x) $ for $x \in S_j$.  The values of $\pi_j(x)$
 for $x\notin S_j$ must be arranged so that $\pi_j$ is a permutation of the 
elements of ${\mathbb Z}_N$.
The operator 
\begin{align}
U_{f,S} : \ket{j}_{\!S} \ket{x}_{\!B} \to \ket{j}_{\!S} \ket{\pi_j(x)}_{\!B} \label{eqn:Ufs}
\end{align}
is unitary.   The operation can be inverted because $j$ specifies the permutation and the permutation is invertible. 

If we initially have a pure state % in the B-register 
\begin{align}
\ket{\Psi_0} = \ket{0}_{\!S} \sum_x a_x \ket{x}_{\!B}, 
\end{align}
where $a_x$ are complex numbers, 
after marking the sets with the operator in equation \ref{eqn:Vj}
\begin{align}
\ket{\Psi_1}  = \ \sum_x \ket{j(x)}_{\!S} a_x \ket{x}_{\!B} . \label{eqn:psi1}
\end{align}

After applying $U_{f,S}$,  the state
in equation \ref{eqn:psi1} becomes 
\begin{align}
\ket{\Psi_2}  = \ \sum_x \ket{j(x)}_{\!S} a_x \ket{f(x)}_{\!B} \label{eqn:ps2}
\end{align}
The set of operations is summarized in Figure \ref{fig:gen}, with intermediate
states labelled.  
The reset channel ${\cal E}_{\rm reset}$ only resets the bits in the S register.
After reset,  
the state in equation  \ref{eqn:ps2} becomes 
\begin{align}
\rho_{AB}'  & = \ket{0}_{\!S} \!\bra{0}_S \otimes \rho_B'  
\end{align}
with mixed state 
\begin{align}
\rho_B' & =  \sum_j \sum_{x, y \in S_j} 
a_x a_y^* \ket{f(x)}_{\!B} \!\bra{f(y)}_B. 
\end{align} 

The function-generated channel with Kraus
operators given by equation \ref{eqn:ES}  can be performed using the controlled 
operators (based on equation \ref{eqn:Vj}) and unitary operations associated with each disjoint set 
(based on equation \ref{eqn:Ufs}), as shown in Figure \ref{fig:gen}.

\begin{figure}[ht]
\includegraphics[width = 3.3 truein]{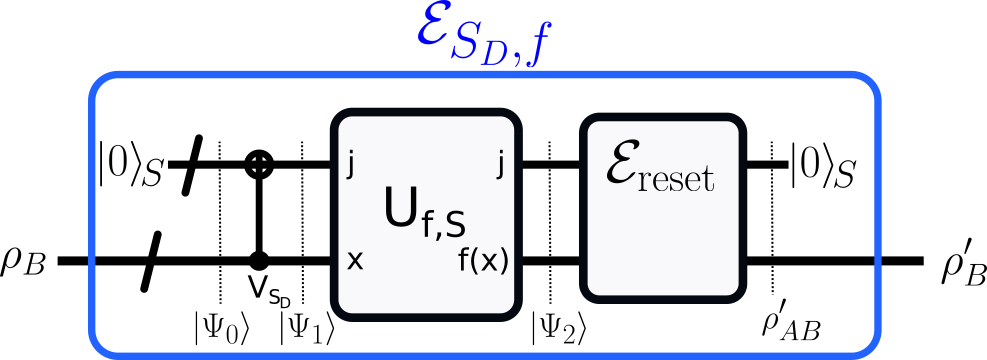}
\caption{This circuit creates a quantum channel ${\cal E}_{S_D,f}$ operating on the bottom register.  The channel is derived from a function $f$ that operates
on ${\mathbb Z}_N$.
The controlled operation on the left  
 (denoted $V_{S_D}$, see equation \ref{eqn:Vj}) and unitary operator
$U_{f,S}$ (equation \ref{eqn:Ufs}) are 
based on a collection of disjoint sets $S_D = \{ S_j\}$ that
satisfy $\cup_j S_j = {\mathbb Z}_N$ and $|f(S_j) = |S_j|$. 
The resulting channel has Kraus operators given in equation \ref{eqn:ES}, as 
defined as in Definition \ref{def:chan}. 
The input of the bottom register is described with density operator $\rho_B$.
The output in the bottom register is $\rho_B' = {\cal E}_{S_D,f}(\rho_B)$.   
The reset operator, ${\cal E}_{\rm reset}$ unentangles the two registers and re-initializes the top register.  The channel makes it possible to iterate the function but is not
necessarily completely dephasing.   For a pure state input, intermediate steps 
$\ket{\Phi_0}, \ket{\Phi_1}$ and $ \ket{\Phi_2}$ are labelled  
 to aid the discussion in the main text. 
 \label{fig:gen}
}
\end{figure}

\subsection{Examples}
\label{sec:examples}

In this subsection we show examples of channels generated from 
a function via Definition \ref{def:chan}.
In subsections \ref{sec:2chan} and \ref{sec:3chan}
we list properties of such all channels derived from functions on 2 and 3 state
systems. In subsection \ref{sec:4chan}  and \ref{sec:echan} we introduce channels on 4 or 8 state systems that preserve coherence
in some subspaces. The logistic map is discussed again in the limit of large $N$ in section \ref{sec:inf}.  The examples illustrate the properties
of this class of channel that are  summarized in section \ref{sec:props}.

\subsubsection{Channels for two state systems}
\label{sec:2chan}

We consider two functions on ${\mathbb Z}_N$ equivalent if they are equivalent after a permutation. 
In other words two functions $f,g$
are equivalent if there is a permutation $\pi(x) \in S_N$ (the permutation group)  such that $\pi^{-1}(f( \pi(x))) = g(x)$. 
%The different classes of functions are illustrated in Figure \ref{fig:class}.
We label the channels with a subscript that denotes their Kraus rank.  

Functions on a two state system, equivalent up to permutation,
 include those listed in Table \ref{tab:2el} 
and are illustrated in Figure \ref{fig:class2}.  In Table \ref{tab:2el} we also
note if the function has fixed points $f(x)=x$ or cycles satisfying $f^k(x) = x$. 
Sets of Kraus operators constructed 
from each collection of disjoint sets are listed in Table \ref{tab:2chan}. 
Associated quantum channel properties of each of the entries
in Table \ref{tab:2chan} are listed in Table \ref{tab:2prop}. 
When discussing properties of each channel, we identify the eigenvalues and right and left eigenvectors of the matrix representation of the channel ${\cal L}_{\cal E}$. 
Right eigenvectors with eigenvalues of modulus 1 are chosen so that they are orthogonal (via the Frobenius inner product). 
Left and right eigenvectors with eigenvalues of modulus 1 are chosen to be bi-orthogonal.

On ${\mathbb Z}_2$ there are two ways to choose $f(0)$ and two ways
to choose $f(1)$, so there are four possible functions on ${\mathbb Z}_2$.   Only the two 
constant functions $f(x) = 0$ and $f(x) = 1$ are equivalent up to permutation,  
giving three functions listed in Table \ref{tab:2el}.  
For functions of ${\mathbb Z}_N$  there are $N^N$ possible functions, but many of these are equivalent up to permutation.   The total number of possible functions of ${\mathbb Z}_N$, equivalent up to permutation, is given in appendix \ref{sec:number_f}.
 
Some properties of the class of channel (given via Definition \ref{def:chan})
are evident in Table \ref{tab:2prop}.  If there is a single Kraus operator, the channel
is unitary (channels $B_1$ and $C_1$).  If there are two Kraus operators (with $n_K= N$), the
channel is dephasing (channels $A_2$, $B_2$, $C_2$).  
The possible eigenvalues are $\pm 1$ or 0.  
All the channels have ${\cal E} = {\cal E}^\ddagger$ excepting the $A_2$ channel. 
The $A_2$ channel
illustrates that eigenvectors
need not be perpendicular to each other (via the Frobenius inner product). 
 
\begin{table}[ht]
\caption{Functions on a set of two elements \label{tab:2el}}
\begin{tabular}{llcl}
\hline
type  & function & \# fixed & cycles \\
\hline 
A: & $f(x) =  $  constant  &1&   \\
B: & $f(x) = x$  & 2 &  \\
C: & $f(0) = 1, f(1) = 0$  & 0 &  2 cycle \\
\hline
\end{tabular}
\end{table}

\begin{table}[ht]
\caption{Function-generated channels for 2 state systems \label{tab:2chan}}
\begin{tabular}{lll}
\hline
Channel & Kraus operators & Disjoint sets\\
\hline
$B_1$ & $K_0 =  \ket{0}\!\bra{0} + \ket{1}\!\bra{1} $ & (0,1) \\
$C_1$ & $K_0 =  \ket{1}\!\bra{0} + \ket{0}\!\bra{1} $ & (0,1)\\
$A_2$ & $K_0 = \ket{1}\!\bra{0}, K_1 = \ket{1}\!\bra{1}$& (0),(1) \\
$B_2$ & $K_0 =  \ket{0}\!\bra{0}, K_1 =  \ket{1}\!\bra{1}$& (0),(1)\\
$C_2$ & $K_0 =  \ket{1}\!\bra{0}, K_1 =  \ket{0}\!\bra{1}$& (0),(1)\\
\hline
\end{tabular}
\end{table}

\begin{figure}[ht]\centering
\includegraphics[width=1.5truein]{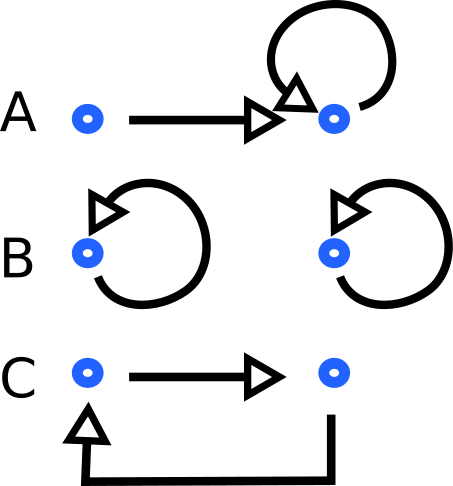}
\caption{Functions of a set of two elements.  
We omit functions that are equivalent via permutation of the elements.  
The blue circles refer to the elements of the set. The black arrows begin at 
an element $x$ and point to an element $f(x)$. 
Only function A is not invertible. 
\label{fig:class2}}
\end{figure}

\begin{table}
\caption{Properties of function-generated channels for 2 state systems \label{tab:2prop}}
\begin{tabular}{llllll}
\hline
Chan. & eigen- & r-eigen- &   l-eigen-  &notes  \\
              & value &  matrix &   matrix   & \\
\hline
$B_1$  & 1 & $\ket{i}\!\bra{j}$ $\forall i,j$ &   ${\cal E} = {\cal E}^\ddagger$  & Identity
       \\
$C_1$ & $\pm 1$ & $\frac{1}{\sqrt{2}}(\ket{0}\!\bra{0} \!\pm\! \ket{1}\!\bra{1}) $ & ${\cal E} = {\cal E}^\ddagger$ &  Unitary    \\
            & $\pm 1$ & $\frac{1}{\sqrt{2}}(\ket{0}\!\bra{1}\! \pm\! \ket{1}\!\bra{0}) $ &   &  Unital   \\
$A_2$ & 1 & $\ket{1}\!\bra{1}$ &   $(\ket{0}\!\bra{0} \!+\! \ket{1}\!\bra{1})$ & \\
           & 0 & $(\ket{1}\!\bra{1}\! -\! \ket{0}\!\bra{0} )$&   $\ket{0}\!\bra{0}$ & eig. not $\perp$   \\
           & 0 & $ \ket{i}\!\bra{j} \  \forall \ i\ne j$ & $ \ket{i}\!\bra{j} \  \forall \ i\ne j$ & Dephasing \\
$B_2$ & 1 & $\ket{0}\!\bra{0}, \ket{1}\!\bra{1}$ &   ${\cal E} = {\cal E}^\ddagger$ & Unital  \\
           & 0 & $ \ket{i}\!\bra{j}\  \forall \ i\ne j$ & & Dephasing \\
$C_2$ & $\pm 1$ & $ \frac{1}{\sqrt{2}}(\ket{0}\!\bra{0}\! \pm \! \ket{1}\!\bra{1})$ & ${\cal E} \!= \!{\cal E}^\ddagger$& 2-cycle, Unital  \\
           & 0 & $ \ket{i}\!\bra{j}\  \forall \ i\ne j$ &   & Dephasing \\
\hline
\end{tabular}
\end{table}

\subsubsection{Channels for three state systems}
\label{sec:3chan}

We consider a qutrit system; a three state system with states $\ket{0}, \ket{1}, \ket{2}$. 
The states are labelled by non-negative integers in ${\mathbb Z}_3$ which is 
the set $\{ 0, 1, 2 \}$. 
Functions on a three state system (equivalent up to permutation)
 include those listed in Table \ref{tab:3el} 
and are illustrated in Figure \ref{fig:class3}. Sets of Kraus operators
for each channel (based on possible collections of disjoint sets) 
are listed in Table \ref{tab:3chan}.  Channel properties such
as its eigenvalues and eigenstates are listed in 
Table \ref{tab:3prop} which we have placed in appendix \ref{ap:3prop}.

Channels (in the class of Definition \ref{def:chan}) for three state system
are somewhat more complex than for two state systems. 
For three state systems, there are generalized right-eigenvectors associated
with eigenvalues that are zero.  In other words, ${\cal L}_{\cal E}$ when 
put in Jordan form, can have non-trivial Jordan blocks associated with zero eigenvalues. 
The possible eigenvalues for the three state channels are 
$\pm 1$, $0$, $\omega$, and $\omega^2$ where $\omega = e^{2 \pi i/3}$. 

\begin{table}[ht]
\caption{Functions on a set of three elements \label{tab:3el}}
\begin{tabular}{llcl}
\hline
type  & function & \# fixed & cycles \\
\hline 
A: & $f(x) =  $  constant  &1&   \\
B: & $f(x) = x$  & 3 &  \\
C: & $f(0) = 1, f(1) = 2, f(2) = 0$  & 0 &  3 cycle \\
D: & $f(0) = 1, f(1) = 0, f(2) = 1$  &  0 & 2 cycle \\
E: & $f(0) = 1, f(1) = 0, f(2) = 2$  & 1  &  2 cycle \\
F: & $f(0) = 0, f(1) = 1, f(2) =0$   & 2   &\\
G: & $f(0) = 0, f(1) = 0, f(2)  =1$  & 1  & \\
\hline
\end{tabular}
\end{table}

\begin{figure}[ht]\centering
\includegraphics[width=2truein]{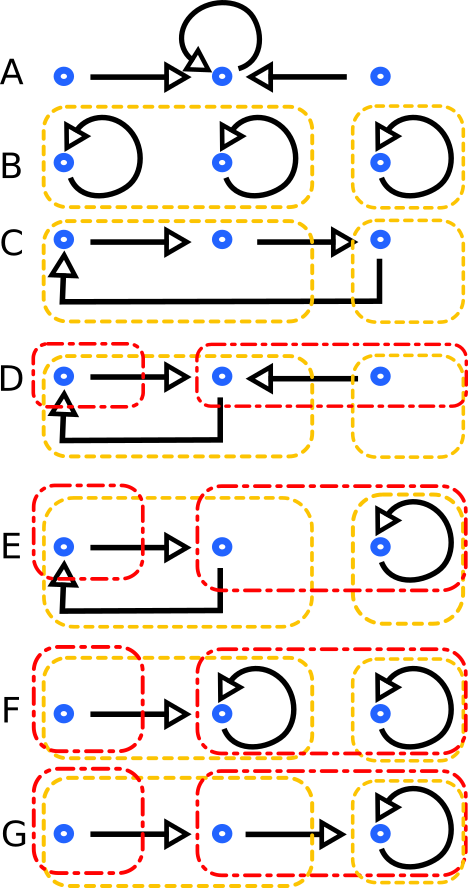}
\caption{Functions of a set of three elements. 
%We show functions that differ in structure.  
We omit functions that are equivalent after permutation of the elements.  The blue circles refer to the elements of the set. The black arrows begin at an element $x$
and point toward the element $f(x)$. 
$A, D, F$ and $G$ functions are not invertible so there must be at least two Kraus operators in a channel that is generated from these functions.  The orange dashed and red dot-dashed lines show disjoint sets for generating quantum channels with two Kraus operators. The orange dashed 
lines show the disjoint sets for channels denoted $B_2$, $C_2$, $D_{2a}$, $E_{2a}$, $F_{2a}$ and $G_{2a}$ in Table \ref{tab:3el}.
The red dot-dashed 
lines show channels denoted  $D_{2b}$, $E_{2b}$, $F_{2b}$ and $G_{2b}$
in Table \ref{tab:3el}.
\label{fig:class3}}
\end{figure}

%For functions of a set of 3,  channels generated from the function have 1 or 2 or 3 Kraus operators. 

\begin{table}[ht]
\caption{Function-generated channels for 3 state systems \label{tab:3chan}}
\begin{tabular}{lll}
\hline
Channel & Kraus operators & Disjoint sets\\
\hline
$B_1$ & $K_0 =  \ket{0}\!\bra{0}+ \ket{1}\!\bra{1} + \ket{2}\!\bra{2}$ &   (0,1,2) \\
$C_1$ & $K_0 = \ket{1}\!\bra{0} + \ket{2}\!\bra{1} + \ket{0}\!\bra{2}$ & (0,1,2)  \\
$E_1$ & $K_0 =  \ket{1}\!\bra{0} + \ket{0}\!\bra{1} + \ket{2}\!\bra{2}$& (0,1,2) \\
\hline
$B_2$     & $K_0 = \ket{0}\!\bra{0} + \ket{1}\!\bra{1},  K_1 = \ket{2}\!\bra{2}$ & (0,1),(2) \\
$C_2$     & $K_0 = \ket{1}\!\bra{0} + \ket{2}\!\bra{1}, K_1 =  \ket{0}\!\bra{2}$& (0,1),(2) \\
$D_{2a}$ & $K_0 = \ket{1}\!\bra{0} + \ket{0}\!\bra{1}, K_1 =  \ket{1}\!\bra{2}$& (0,1),(2) \\
$D_{2b}$ & $K_0 = \ket{0}\!\bra{1} + \ket{1}\!\bra{2}, K_1 =  \ket{1}\!\bra{0}$ & (1,2),(0) \\
$E_{2a}$ & $K_0 = \ket{1}\!\bra{0} + \ket{0}\!\bra{1}, K_1= \ket{2}\!\bra{2}$ &  (0,1),(2) \\
$E_{2b}$ & $ K_0 = \ket{1}\!\bra{0} + \ket{2}\!\bra{2}, K_1= \ket{0}\!\bra{1}$ &  (0,2),(1)  \\
$F_{2a}$ & $K_0 = \ket{1}\!\bra{0} + \ket{2}\!\bra{2}, K_1 = \ket{1}\!\bra{1}$ &  (0,2),(1)\\
$F_{2b}$ & $K_0 = \ket{1}\!\bra{1} + \ket{2}\!\bra{2}, K_1 = \ket{1}\!\bra{0}$&  (1,2),(0) \\
$G_{2a}$ & $K_0 = \ket{1}\!\bra{0} + \ket{2}\!\bra{1}, K_1 = \ket{2}\!\bra{2}$&  (0,1),(2) \\
$G_{2b}$ & $K_0 = \ket{1}\!\bra{0} + \ket{2}\!\bra{2}, K_1 = \ket{2}\!\bra{1}$ &  (0,2),(1)\\
\hline
%\end{tabular}
%\begin{tabular}{lll}
$A_3$ & $K_0 = \ket{1}\!\bra{0}, K_1 = \ket{1}\!\bra{1}, K_2 = \ket{1}\!\bra{2}$ &(0),(1),(2) \\
$B_3$ & $K_0 = \ket{0}\!\bra{0}, K_1 = \ket{1}\!\bra{1}, K_2 = \ket{2}\!\bra{2}$&(0),(1),(2)\\
$C_3$ & $K_0 = \ket{1}\!\bra{0}, K_1 = \ket{2}\!\bra{1}, K_2 = \ket{0}\!\bra{2}$&(0),(1),(2) \\
$D_3$ & $K_0 = \ket{1}\!\bra{0}, K_1 = \ket{0}\!\bra{1}, K_2 = \ket{1}\!\bra{2}$&(0),(1),(2)\\
$E_3$ & $K_0 = \ket{1}\!\bra{0}, K_1= \ket{0}\!\bra{1}, K_2 = \ket{2}\!\bra{2}$ &(0),(1),(2)\\
$F_3$ & $K_0 = \ket{1}\!\bra{0}, K_1= \ket{1}\!\bra{1}, K_2 = \ket{2}\!\bra{2}$ &(0),(1),(2)\\
$G_3$ & $K_0 = \ket{1}\!\bra{0}, K_1= \ket{2}\!\bra{1}, K_2 = \ket{2}\!\bra{2}$&(0),(1),(2) \\
\hline
\end{tabular}
\end{table}

\subsubsection{Example of a 4 state channel where superposition is maintained in subspaces}
\label{sec:4chan}

\begin{figure}[ht]
\includegraphics[width=1truein]{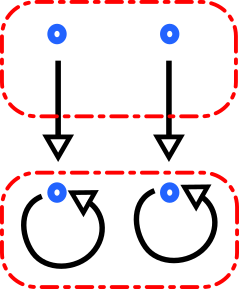}
\caption{An example of a function of 4 elements. 
The red dot-dashed lines show disjoint sets used to construct the Kraus operators for a
particular quantum channel. 
We discuss coherent subspaces in the channel generated from this function 
and these disjoint sets. 
\label{fig:4maps}}
\end{figure}

Consider the example of a function of 4 elements shown in Figure  \ref{fig:4maps}.
We use this function to generate a channel on a 4 state quantum system. 
We assign states so that 
the function has $f(0) = 0, f(1) = 1, f(2) = 0, f(3)=1$. 
We choose disjoint subsets 
$S_D = \{ (2,3), (0,1) \}$ such that one subset contains the chain starting points 2,3, 
and the other subset contains the two fixed points 0,1. 
A set of Kraus operators for this channel along with their associated disjoint subsets 
is
\begin{align}
K_a &= \ket{0}\!\bra{2} + \ket{1}\!\bra{3} \qquad (2,3) \nonumber \\
K_b &=\ket{0}\!\bra{0} + \ket{1}\!\bra{1} \qquad (0,1) . \label{eqn:4chan}
\end{align}
%The associated disjoint subsets are $(2,3), (0,1)$. 
%Notice we have grouped the points so that the fixed points are in the same operator and points that are in chains  are in the same operator. 
The eigenvalues of this channel are listed in Table \ref{tab:4chan}.
Within the subspace spanned by $\ket{0}, \ket{1}$ the channel
gives the identity transformation.  This is evident from the 4 eigenvectors
that have eigenvalue of 1.  

We consider how the channel operates on a pure and superposition state
$\ket{v} = a \ket{2} + b\ket{3}$ (with $a,b$ complex numbers)
that lies within the subspace spanned by $\{ \ket{2},\ket{3}\} $.
The channel operates on $\ket{v}\!\bra{v}$, as  
${\cal E}(\ket{v}\!\bra{v}) = \ket{w}\!\bra{w}$
with $\ket{w} = a \ket{0 } + b \ket{1}$. 
Note that $\ket{w}$ lies in the subspace spanned by $\{ \ket{0},\ket{1} \} $, 
so subsequent iterations (with $k \ge 1$)  of the channel give 
${\cal E}^k(\ket{v}\!\bra{v}) = \ket{w}\!\bra{w}$.
The superposition is maintained indefinitely.  
A pure superposition state initially in the $\{ \ket{2},\ket{3}\}$ subspace 
will maintain its superposition even after multiple applications of the channel but
 within the asymptotic space spanned by $\{ \ket{0},\ket{1} \}$. 
What happens if the initial state is a superposition $a\ket{2} + b\ket{3} + c \ket{0}$?
After a single application of the channel, the result is a mixed state with density operator 
$ \ket{w}\bra{w} + c c^* \ket{0}\!\bra{0}$ where 
$\ket{w} = a \ket{0} + b \ket{1}$. 
Coherence is maintained for the portion of the initial state in the 
$\{ \ket{2},\ket{3} \} $ subspace.
However, because two subspaces merge into the $\{ \ket{0},\ket{1} \}$ subspace, 
the portion of the state initially in the $\{ \ket{2},\ket{3} \} $ subspace  and that
initially in the $\{ \ket{0}, \ket{1} \} $ subspace are dephased. 

%The channel gives the identity within the subspace spaced by $\ket{0},\ket{1}$. 
Can we tell from the 
 eigenvalues and eigenvectors of ${\cal L}_{\cal E}$ that superposition of
 a state initially 
 within the $\{ \ket{2},\ket{3} \}$ subspace is maintained by the channel?
 The eigenvectors that do not 
 lie within the $\{\ket{0},\ket{1}\}$ subspace have eigenvalue of zero. This 
 is expected because the asymptotic subspace of the channel is equal to the space
 of density operators of the Hilbert space spanned by $\{\ket{0},\ket{1}\}$.    
In other words, the asymptotic subspace is spanned by $\{ \ket{0}\!\bra{0}$, $\ket{1}\!\bra{1}$, $\ket{0}\!\bra{1}$, $\ket{1}\!\bra{0} \}$. 
Notice that there are four eigenvectors with zero eigenvectors
 that are not perpendicular to matrices in the standard basis $\{ \ket{i}\bra{j} \}$.
Preservation of superposition is related to these non-perpendicular eigenvectors
and to the conserved quantities.   
The conserved quantities (left-eigenvectors with eigenvalue of 1) 
include $\ket{0}\!\bra{0}+\ket{2}\!\bra{2}$  and $\ket{1}\!\bra{1}+\ket{3}\!\bra{3}$. 
The channel sends  $\ket{0}\!\bra{0} \to \ket{2}\!\bra{2}$
so the trace of the subspace spanned by $\ket{0},\ket{2}$ is preserved,
and similarly  the trace of the subspace spanned by $\ket{1},\ket{3}$ is preserved
because ${\cal E}(\ket{1}\!\bra{1}) = \ket{3}\!\bra{3} $.
The conserved quantity  $J  = \ket{0}\!\bra{1}+\ket{2}\!\bra{3} $
and its adjoint arise because off-diagonal operators such as 
$\ket{2}\!\bra{3} \to \ket{0}\!\bra{1}$ via the channel. 
For example, consider how the channel operates on the operator 
$Y = a \ket{3}\!\bra{2} + b \ket{1}\!\bra{0}$ comprised 
of diagonal terms; ${\cal E}(Y) = (a+b) \ket{1}\!\bra{0} $.
This is reflected in the behavior of 
the conserved quantity $J$ gives 
 $\tr (JY) = \tr (J {\cal E}(Y))  = (a+b).$ 
%We compute 
%\begin{align}
%%$XY = (\ket{0}\!\bra{1}+\ket{2}\!\bra{3} ) ( a \ket{3}\!\bra{2} + b \ket{1}\!\bra{0})$
%JY & = a \ket{2}\!\bra{2}  + b \ket{0}\!\bra{0} \\
%J {\cal E}(Y) & = (a+b) \ket{0}\!\bra{0} .
%%$X E(Y) = (\ket{0}\!\bra{1}+\ket{2}\!\bra{3} ) ( a+b)  \ket{1}\!\bra{0})$
%\end{align} 
%We find that $\tr (JY) = \tr (J {\cal E}(Y))  = (a+b).$ 

\begin{table}[ht]
\caption{Properties of the channel of Eqn. \ref{eqn:4chan} \label{tab:4chan}}
\begin{tabular}{lll}
\hline
Eigenval. & r-Eigenvec. &  l-Eigenvec. \\
\hline
1 & $\ket{0}\!\bra{0} $ &  $\ket{0}\!\bra{0} \! +\!  \ket{2}\!\bra{2}$\\
1 & $\ket{1}\!\bra{1}$ & $\ket{1}\!\bra{1} \! +\!  \ket{3}\!\bra{3}$ \\
1 & $\ket{0}\!\bra{1} $ & $\ket{1}\!\bra{0} \! +\!  \ket{3}\!\bra{2}$ \\
1 & $\ket{1}\!\bra{0}$ & $\ket{0}\!\bra{1} \! +\!  \ket{2}\!\bra{3}$ \\
0 & $\frac{1}{\sqrt{2}}(\ket{3}\!\bra{3} \! -\!  \ket{1}\!\bra{1})$      & $\ket{3}\!\bra{3}$  \\
0 &  $\frac{1}{\sqrt{2}}(\ket{2}\!\bra{2}\! -\!  \ket{0}\!\bra{0})$      & $\ket{2}\!\bra{2}$ \\
0 &  $\frac{1}{\sqrt{2}}(\ket{2}\!\bra{3}\!  - \! \ket{0}\!\bra{1})$      &  $\ket{3}\!\bra{2}$\\
0 &  $\frac{1}{\sqrt{2}}(\ket{3}\!\bra{2}\!  -\!  \ket{1}\!\bra{0})$      &  $\ket{2}\!\bra{3}$  \\
0 & $\ket{2}\!\bra{1}$ , $\ket{1}\!\bra{2}$&  $\ket{2}\!\bra{1}$ , $\ket{1}\!\bra{2}$  \\
0 & $\ket{3}\!\bra{1}$ , $\ket{1}\!\bra{3}$ &   $\ket{3}\!\bra{1}$ , $\ket{1}\!\bra{3}$  \\
0 & $\ket{2}\!\bra{0}$ , $\ket{0}\!\bra{2}$ &  $\ket{2}\!\bra{0}$ , $\ket{0}\!\bra{2}$ \\
0 & $\ket{3}\!\bra{0}$ , $\ket{0}\!\bra{3}$   & $\ket{3}\!\bra{0}$ , $\ket{0}\!\bra{3}$  \\
\hline
\end{tabular}
\end{table}

\subsubsection{Examples of error correction channels}
\label{sec:echan}

The example discussed in the previous subsection illustrates how a function-generated
channel can preserve coherence in a subspace.  Here 
we illustrate an error correction procedure that can be equivalently described as 
a channel generated from a discrete function.
For example, consider the three-bit bit-flip error correction code  
 where $\ket{0}$ is encoded as $\ket{000}$ and $\ket{1}$ is encoded
as $\ket{111}$.  Using this code, bit flip errors $X_1, X_2, X_3$
(referring to Pauli X gates on the first, second and third bit, respectively)
can be detected and corrected. 
In Figure \ref{fig:ECC}a with arrows, we show the function associated
with the channel that is usually used for error correction. 
Conventionally each type of error is detected (via what is known as a syndrome) 
and then  
corrected, as shown in Figure \ref{fig:ECC}a.   Disjoint sets are shown 
with red dot-dashed loops.  Each correctible error gives a single disjoint set. 
The transformations required for error correction give the function mapping error 
states to encoding states. 
As in our previous example, the disjoint sets
contain more than one set element and this allows coherence to 
be maintained within the different subspaces. 

The function shown in Figure \ref{fig:ECC}b could also be used to correct
the same errors.  With this function,  the channel would need to be called three times to correct all possible bit-flip errors.   A similar type of error correction protocol was
 used by \cite{Sivak_2023}.
 
Consider the circuit of Figure \ref{fig:gen} for implementation of either of these channels.  
The operation denoted $V_{S_D}$ (equation \ref{eqn:Vj}) is equivalent to
a syndrome detection circuit, and the unitary operation $U_{f,S}$ 
(based on equation \ref{eqn:Ufs}) which applies the function, is equivalent
to the error correction operation.  Bits used for syndrome detection must be reset
before the procedure can be iterated, 
otherwise new errors would not be found and corrected with the same syndrome bits.  
As is often true, errors identified
through syndrome detection need not be measured.   The information about
which error has occurred 
is lost when the syndrome detection register is reinitialized. 
Error detection gives a set of disjoint sets where each correctible error defines 
a single set. Error correction gives a discrete map which sends each error state to 
the appropriate state in the encoding subspace. 
Thus error detection and error correction can fit
into the class of quantum channels considered in Definition \ref{def:chan}.

\begin{figure*}[ht]\centering
\includegraphics[width=4.5truein]{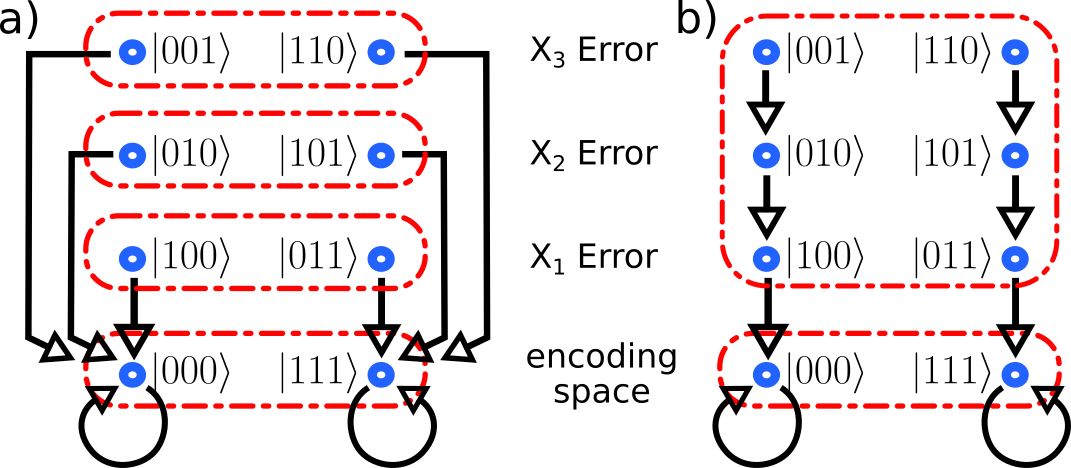}
\caption{a) Error correction for the 3-bit bit-flip error correcting code
is equivalent to a channel that can be described as generated from a discrete function. 
The red dot-dashed loops show the disjoint sets used to generate the Kraus operators
for the channel. 
b) Error correction can also be done with a different discrete function
that uses fewer Kraus operators. 
\label{fig:ECC}}
\end{figure*}

%Channels that are based on permutations of states cam maintain   coherence within a subspace.    For the channel defined by equation \ref{eqn:4chan},  the asymptotic subspace   is spanned by $\ket{0},\ket{1}$, however  superpositions initially present in the  $\ket{2},\ket{3}$ subspace are maintained as they are mapped into the  $\ket{0},\ket{1}$ subspace. 

%A condition to maintain coherence (or teleport) within a subspace  is something like $\sum_j K_j M_A = \sum_j M_B K_j$ where $M_A$ is a unitary transformation in one subspace and $M_B $ is the same transformation but in another subspace.   This is similar to a requirement for quantum teleportation.

\subsubsection{An example of a channel on an infinite system}
\label{sec:inf}

Can function-generated channels (via Definition \ref{def:chan}) be extended in the limit of the number of states $N\to \infty$?
Consider the continuous set of states on the Hilbert space of the unit interval $\ket{x}$ with $x \in [0,1)$
which would be the infinite dimensional limit of the example described
in section \ref{sec:trunc} of the truncated logistic map. 
The peak of the logistic map $g(x) = \mu x(1-x)$ (previously equation 
\ref{eqn:logistic}) is at $x=1/2$. 
We divide the unit interval into two segments $0 \le x < 1/2$ and $1/2 \le x <1$.
Given $g(x)$ and identifying the segment that contains $x$, 
 we can invert $g(x)$. 
The two segments are two disjoint sets $S_0 = [0,1/2), S_1 = [1/2,1)$ 
and they can be used to create a channel that applies 
 the logistic map.   
The channel is formed from only two Kraus operators
\begin{align}
K_0 &= \int_0^\frac{1}{2} dx \ket{g(x)}\bra{x}\nonumber \\
K_1 & = \int_{\frac{1}{2}}^1 dx \ket{g(x)}\bra{x}.
\end{align}
This channel can have chaotic orbits as the logistic map itself 
has them.   

A channel for an infinite dimensional quantum system may be difficult to implement.  
Nevertheless, as a thought experiment we can  
look at the operators for implementing the channel, 
following section \ref{sec:realize}. 
We would only need a single additional qubit in the S register to denote the segment.  
The B subspace would be the Hilbert space of the unit interval. 
The projection operators (via equation \ref{eqn:Pj}) on the B subspace would be 
\begin{align}
P_0 &=  \int_0^\frac{1}{2} dx \ket{x}\!\bra{x}\nonumber \\
P_1 & = \int_{\frac{1}{2}}^1 dx \ket{x}\!\bra{x}.
\end{align}
The controlled operator operating on the bipartite 
quantum system (via equation \ref{eqn:Vj}) 
\begin{align}
V_{S_D} &= ( \ket{0}_{\!S}\!\bra{0}_S  + \ket{1}_{\!S}\!\bra{1}_S) \otimes \int_0^\frac{1}{2} dx  \ket{x}_{\!B}\!\bra{x}_B  \nonumber \\
 & \ \ \    
+  (\ket{0}_{\!S}\!\bra{1}_S +  \ket{1}_{\!S}\!\bra{0}_S)  ) \otimes\int_{\frac{1}{2}}^1 dx  \ket{x}_{\!B} \!\bra{x}_B .
\end{align}

The associated unitary operation (via equation \ref{eqn:Ufs}) 
could be 
\begin{align}
U_{f,S}: &  \ket{0}_{\!S} \ket{x}_{\!B} \to \begin{cases}
\ket{0}_{\!S}\ket{g(x)}_{\!B} & {\rm for\ } 0<x \le \frac{1}{2} \\
\ket{0}_{\!S}\ket{v(x)}_{\!B}& {\rm for\ } x >\frac{1}{2} \nonumber \\
\end{cases}\\
& \ket{1}_{\!S} \ket{x}_{\!B} \to \begin{cases}
\ket{1}_{\!S}\ket{v(x)}_{\!B} & {\rm for\ } 0 <x \le \frac{1}{2} \\
\ket{1}_{\!S}\ket{g(x)}_{\!B}& {\rm for\ } x >\frac{1}{2} \\
\end{cases}. \label{eqn:U_log}
\end{align}
where $v(x)$ is a V-shaped function on the unit interval 
that has $v(0) = 1, v(1) =1$
and $v(1/2) = g(1/2)$, as shown in Figure \ref{fig:U_log}.
\begin{figure}
\includegraphics[width=1.5 truein]{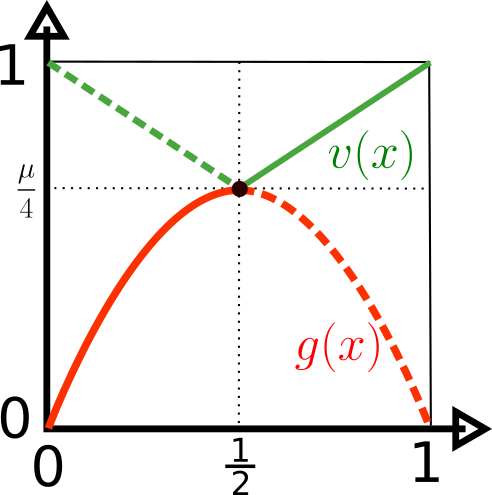}
\caption{We show a V-shaped function $v(x)$ (in green) along with 
the logistic map $g(x) = \mu x (1-x)$ (in red).  Two invertible functions
 can be constructed from $v()$ and $g()$ using left and
 right sub-intervals of the unit interval with dividing point at $x=1/2$.
One invertible function (shown as solid lines) is equal to $g(x)$ for $x \in[0, 1/2)$ and $v(x)$ for $x\in[1/2,1)$.
The other (shown with dashed lines) is equal to $v(x)$ for $x \in[0, 1/2)$ and is $g(x)$ for $x\in[1/2,1)$.  
These two invertible functions are used to create the unitary operation
 in equation \ref{eqn:U_log}.
\label{fig:U_log}}
\end{figure}

Because only a single additional qubit is needed to create the channel,  
 only a small amount of information need be removed during each iteration
of the channel.  Despite the relatively low information loss  
 with only two Kraus operators, the associated operation 
is usually dephasing as orbits (including the attracting periodic orbits)
tend to go back and forth across the midpoint $x=1/2$.   
The channel, when iterated, would mostly behave classically.

%Possibly other types of constructions could give phenomena that lie between classical and quantum descriptions.  

\subsection{Properties of channels generated from a function}
\label{sec:props}

Examination of the examples given in the previous sections 
helps us identify characteristics of the quantum channels 
generated from a function as defined in Definition \ref{def:chan}.
 
Properties of a channel ${\cal E}_{S_D,f}$ defined via a function $f$ and a set of disjoint sets $S_D$ (equation \ref{eqn:ES}) in Definition \ref{def:chan}
\begin{enumerate}
\item
The minimum number of Kraus operators describing a channel associated
with $f$ is equal to 
 $N-|f(\mathbb{Z}_N)|+1 $,  which the smallest possible number of allowed disjoint sets 
arising from the completeness requirement for the set of Kraus operators. 
\item
The maximum number of Kraus operators describing the channel is equal to $N$.
This is much smaller than the maximum
possible number of Kraus operators for any quantum channel ($N^2$).
This type of channel is a small subset of all possible quantum channels. 
\item 
If the Kraus rank of the channel  is 
the same as the dimension of the quantum space, $n_K = N$, then the channel is dephasing.  
This follows as each Kraus operator performs a permutation of the set elements
and all off diagonal elements must be eigenvectors of the matrix representation
of the channel with eigenvalue of zero.  
\item 
If $f$ is bijective and there is a single Kraus operator in the channel,
then the channel is a unitary transformation that consists of a permutation of basis states. 
\item 
Eigenvalues of the matrix representation of the channel ${\cal L}_{\cal E}$ are either roots of unity or zero. 
For the outline of a proof see appendix \ref{ap:proofs}. 
The asymptotic subspace for these channels is spanned by the right eigenvectors 
of  ${\cal L}_{\cal E}$  with non-zero eigenvalues as these all have modulus 1.   
\item
If a collection of disjoint sets  $\{ S_j \}$ has sets that contain more than 1 element, 
then a different channel can be constructed by dividing one or more sets into collections of smaller sets. 
The resulting channel has more right-eigenvectors or/and generalized right-eigenvectors (of ${\cal L}_{\cal E}$) with zero eigenvalues than to its parent.  (See the appendix  \ref{ap:proofs} for how these eigenvalues 
and eigenvectors arise.)  We give some examples.  The $D_{2a}$ channel 
has 5 non-zero eigenvalues.   When split to become the $D_3$ channel,
there are two additional eigenvectors with eigenvalues of zero.  
The $D_{2b}$ channel has 5 zero eigenvalues
and 2 generalized eigenvectors associated with 0 eigenvalues. 
When split to become the $D_3$ channel,  the generalized eigenvectors  in the $D_{2b}$ channel become regular eigenvectors. 
\item
Each $k$-periodic orbit of the function $f$, satisfying $f^k(x) = x$, 
generates a $k$-periodic orbit of the channel.  The orbit has initial condition  
$\rho = \ket{x}\bra{x}$ and satisfies ${\cal E}^k (\rho) = \rho$. 
The states in the cycle $\ket{f^j(x)}\bra{f^j(x)}$ for $j \in \{0, 1, ..., k-1\}$ 
are density operators with trace of 1.   
The states in the cycle are orthogonal with respect to the 
Hilbert-Schmidt inner product. 
\item 
A $k$-periodic orbit in $f$ generates a fixed point of the channel that is
\begin{equation}
\rho_{*,k} = \frac{1}{k} \sum_{j=0}^{k-1} \ket{f^j(x)}\bra{f^j(x)}.
\label{eqn:rhok}
\end{equation}
This fixed point is a valid density matrix as its trace is 1 and 
it is a mixture of pure states. 
\item 
A $k$-periodic orbit of the function $f$, satisfying $f^k(x) = x$, 
generates an invariant subspace (with respect to ${\cal E}_{S_D,f}$)  in the space of density matrices.
The invariant subspace is the set of density matrices
$S_x  = \{ \rho: \rho = \sum_{j=0}^{k-1} a_j \ket{f^j(x)} \bra{f^j(x)}\}$ with the set $\{a_j\}$ probabilities; $\sum_{j=0}^{k-1} a_j = 1$.  The subspace satisfies 
${\cal E}_{S_D,f}(\rho) \in S_x$,  for any $\rho \in S_x$, 
The set $S_x$ contains the fixed point $\rho_{*,k}$ defined in equation \ref{eqn:rhok}.
All $\rho \in S_x, \rho \ne \rho_{*,k}$ generate $k$-periodic orbits that remain within $S_x$. 
\item 
Each $x$, satisfying $f^k(x) = x$, generates $k$ right eigenvectors (of ${\cal L}_{\cal E}$)
in the form 
\begin{equation}
\rho_m = \frac{1}{k} \sum_{j=0}^k \omega^{mj} \ket{f^j(x)}\!\bra{f^j(x)},  \label{eqn:rhom}
\end{equation}
satisfying ${\cal E}(\rho_m) = \omega^{-m} \rho_m$, 
with $m\in \{ 0, 1,..., k-1\}$ and $\omega = e^{2 \pi i/k}$.  
These eigenvectors 
have eigenvalues that are complex roots of unity.  
The eigenvalue of eigenvector $\rho_m$ is equal to $\omega^{-m}$. 

\item
Orbits of the channel either decay to zero, to a fixed point, or enter a cycle.
We refer to these as end-states. 
This follows from the possible eigenvalues.   %This is true for all channels,  not just ours.  
The maximum number of iterations for the channel to achieve an end state, 
from any initial condition, 
is equal to the maximum link length $k_c$ of the function $f$ (with $k_c$ defined in equation \ref{eqn:k_c}).
For how this arises, see appendix \ref{ap:proofs}. 
\item
How many zero eigenvalues of ${\cal L}_{\cal E} $ are present?   
Let $n_f = | f({\mathbb Z}_N)|$
be the number of elements in the range of the function. 
%The range of $f$ has $n_f \le N$.
In the matrix basis, the range of the channel is at most  
$n_f^2$.  This implies that there are at least $N^2 - n_f^2$ eigenvectors 
or generalized eigenvectors  associated with zero eigenvalues. 
\item If all members of a $k$-cycle of $f$ are members of a subset 
$S_j$, i. e., 
$f^m(x) \in S_j$ for $m=0, 1, ...., k-1$, 
then the channel gives a unitary transformation of
the vector subspace in $\cal H$ generated via $\{ \ket{f^m(x)} \}$ with $m = 0, 1, ...., k-1$. 
The unitary transformation is equivalent to a cyclic permutation 
of the basis states of this vector subspace. 
\item 
For two integers in a disjoint set, $x,y \in S_j$,  with $S_j \in S_D$, a single operation
of the channel preserves
a superposition in the subspace spanned by $\ket{x}, \ket{y}$.
In other words if we take $\ket{v} = a \ket{x} + b \ket{y}$,  ($a, b$ complex numbers) 
the channel 
gives ${\cal E} (\ket{v}\bra{v}) = \ket{w}\bra{w}$ with 
$\ket{w} = a \ket{f(x)} +  b \ket{f(y)}$. 
This is behavior is described in more detail in section \ref{sec:4chan}
and is a necessary component of error correction channels.
\end{enumerate}

\section{Summary and Discussion}
\label{sec:sum}

We construct a quantum channel using a function on a discrete set 
and the ability to reset or initialize a quantum register. 
The channel makes it possible to iterate a non-invertible oracle function. 
With oracle function 
 a truncated version of the logistic map, we construct a quantum channel that  
exhibits much of the phenomenology of the logistic map, including
attracting periodic orbits.   The quantum channel is dephasing, so its iteration is 
 essentially a classical operation.  We used the logistic map because its computation only
  requires addition and multiplication operations. The channel 
 can exhibit attracting periodic orbits of any period, depending upon the size
 of the quantum system and the single adjustable parameter for the map. 
% and the oracle function   maps integers to integers. 
This simple construction illustrates a way to construct quantum channels
that are not ergodic, in the sense that they can exhibit more than one stationary 
state (a fixed point) in the space of density operators and the channel 
can contains orbits that are non-decaying cycles. 
Function-generated channels give 
 straightforward examples of quantum channels that display the phenomena 
of cycles that can be present in the asymptotic limits of some quantum channels 
\cite{Wolf_2012,Albert_2019,Carbone_2020,Amato_2022}. 
%Orbits can still enter fixed points. 

A possible application of dephasing quantum channels that iterate functions is
in hybrid quantum/classical algorithms \citep{Callison_2022}.  
Using a unitary quantum oracle, 
swap functions and a reset/initialization of one quantum register, we constructed 
a quantum channel that, when iterated, computes the greatest common divisor of two integers.    While it seems amusing to employ the 2300 year old Euclidian algorithm
on a quantum computer, this algorithm is a component in Shor's factoring algorithm. 
Euclid's algorithm for finding the greatest common divisor does not require many operations in comparison to the quantum components of Shor's algorithm itself, 
so if  transferring information back and forth (via state
initialization and measurement) between
quantum and classical computers is difficult, slow or error prone, 
then this classical component of
Shor's algorithm could be done in situ on a quantum computer. 

It has long been known that any classical operation can be carried out on a quantum computer (e.g., \cite{Nielsen_2010, Rieffel_2011}).  However, 
when non-invertible functions are iterated, an increasing number 
of ancillary qubits are necessarily entangled (e.g., \cite{Seidel_2022}).  
By resetting registers,  we have sacrificed 
coherence for the ability to reuse ancillary qubits and this gives the ability
to iterate non-invertible functions.  

Function-generated channels are not limited to completely dephasing channels.
We illustrate with Definition \ref{def:chan} 
how channels can be constructed from discrete functions using
disjoint subsets and 
 with fewer Kraus operators than in the completely dephasing case. 
%We generate a more general class of function generated channels using  disjoints sets.  
There is a relation between the size of 
the discrete function's range and the possible choices for the set of 
 Kraus operators determining the channel. 
This choice is related to the loss of information that occurs 
when a non-invertible function is computed and its original arguments
are forgotten. 
With more Kraus operators, more information is lost during application 
 of the channel, resulting in additional eigenvectors with
 zero eigenvalues and a reduction in the dimension or number of coherent subspaces.
 
The function-generated channels described here include those that exhibit 
 coherent subspaces, similar to some error correction routines. 
 In this setting, 
syndrome detection is equivalent to marking each state with an associated disjoint set 
 and error correction is performed by applying a unitary
operation that permutes the basis element and that depends on the disjoint set.
As syndrome detection must be done with initialized states, the procedure
of syndrome detection followed by error correction, when 
iterated, is equivalent to a iterating a quantum channel that 
 initializes or resets the state within a subspace and then applies a unitary transformation. 
 
Future applications of the simple channels discussed here could include the development
of novel random walks on open quantum systems \cite{Carbone_2020}, 
 generating additional types of POVMs \cite{Buscemi_2005}, varying
 procedures for error correction \citep{Sivak_2023} and computing components of 
optimization routines that are typically done classically, but in situ
on a quantum computer within quantum/classical hybrid algorithms.  

%Another example might be via Newton's routine within the variational method as part of the contracted eigenvalue equations for molecular simulations by \cite{Smart_2021}.

%Mention other rapidly converging maps such as Newton.  Mention strange attractors, circle map.  Mention infinite dim space limit?

\begin{acknowledgments}

This work was inspired by discussions with Ray Parker. 
We thank Ray Parker, Joey Smiga, Liz Champion, Andrew Jordan, Phillip Lewalle, Gil Rivlis,  Gabriel Landi, Machiel Bloch, and Max Neiderbach for helpful discussions. 
We thank Ray Parker and Joey Smiga for insightful comments on this manuscript.  

\end{acknowledgments}

{\bf Data Availability Statement} 

All data generated or analysed during this study are included in this published article.

%\subsection{Author contributions}
%All authors contributed to the study conception, design and the examples.   The first draft of the manuscript was written by Alice Quillen, but all authors commented and contributed to subsequent versions of the manuscript. All authors read and approved the final manuscript.

%%%%%Examples, calculations and proofs were initiated by Alice Quillen but in some cases proposed, corrected and extended by Nathan Skerrett and Rayleigh Parker. 

%\subsection{Declarations}
%The authors did not receive support from any organization for the submitted work. The authors have no competing interests to declare that are relevant to the content of this article.  

%%%%%%%The authors declare they have no financial interests.

\bibliography{circ}

\appendix

\section{Number of functions on $\mathbb{Z}_n$ up to permutation}
\label{sec:number_f}

The number of possible functions on 
a discrete set, equivalent up to permutation, determines the potential range of phenomena exhibited by the class of quantum channel defined in Definition \ref{def:chan}.  Using work by \cite{Shattuck-Holloway}, we find that we can give a expression for the total number of possible functions of  $\mathbb{Z}_n$ in the equivalence class.  

Let $n\in\mathbb{Z}^+$, and let $X$ denote the set of functions from $\mathbb{Z}_n$ to itself. The permutation group on $\mathbb{Z}_n$, which we denote by $S_n$, acts on $X$ via conjugation:
\begin{equation}
    \sigma\circ f\circ \sigma^{-1},
\end{equation}
with $\sigma \in S_n$. 
If $f\in X$, then the orbit of $f$ under this action is the set:
\begin{equation}
    orb(f)=\{g\in X:\exists \sigma\in S_n\text{ s.t }g=\sigma\circ f\circ\sigma^{-1}\}.
\end{equation}
If $f,g\in X$, we say $f$ is equivalent to $g$ up to permutation, and write $f\sim g$, if and only if there is a  $\sigma\in S_n$ such that $g=\sigma\circ f\circ\sigma^{-1}$. The relation $\sim$ is an equivalence relation.
Moreover, the orbit of any $f\in X$ is precisely the set of all functions equivalent to $f$ up to permutation. We count the number of functions from $\mathbb{Z}_n$ to itself considered up to permutation equivalence, i.e, count the total number of distinct orbits under this group action. 

For a fixed $\sigma\in S_n$, we define
\begin{equation}
X_\sigma=\{f\in X:\sigma\circ f\circ\sigma^{-1}=f\} \label{eqn:X_sig}
\end{equation}
which is the set of all functions from $\mathbb{Z}_n$ to itself which commute with permutation $\sigma$. 

Any permutation $\sigma\in S_n$ can be written uniquely as a product of disjoint cycles (of possibly differing lengths). Keeping with the notation of \cite{Shattuck-Holloway}, we let $\lambda_i$ denote the number of cycles of length $i$ for $\sigma$ written in cyclic notation in terms of disjoint cycles. Proposition 2.6 from \cite{Shattuck-Holloway} states that the number $N_{\rm pairs}$ of ordered pairs $(\tau,f)$, with permutation $\tau \in S_n$, function $f \in X$, and such that 
the permutation commutes with the function; $\tau\circ f=f\circ \tau$, is
\begin{align}
N_{\rm pairs}  =% & \nonumber \\
  & \!\!\!\!\!\ \sum\limits_{\substack{1\lambda_1+...+n\lambda_n=n, \\ \:\:\lambda_i\geq 0}}\frac{n!}{\lambda_1!\cdot...\cdot\lambda_n!1^{\lambda_1}\cdot...\cdot n^{\lambda_n}}\nonumber  \\
  & \qquad \qquad \times 
  \prod_{k=1}^{n}\left(\sum_{j|k}j\lambda_j\right)^{\lambda_k}.
  \label{eqn:N_pairs}
\end{align}
The form of proposition 2.6 stated by  \cite{Shattuck-Holloway} is defined in terms of functions on  $\{1,...,n\}$, however,  the version of Proposition 2.6 stated above is equivalent as $\{1,...,n\}$ and $\mathbb{Z}_n$ are finite sets of the same size. Using this proposition, we find the following:
\begin{theorem}
    Let $n\in\mathbb{Z}^+$. The number of functions from $\mathbb{Z}_n$ to itself, considered up to permutation equivalence, is
    \begin{equation}
    \sum\limits_{\substack{1\lambda_1+...+n\lambda_n=n,\\ \:\:\lambda_i\geq 0}}\frac{1}{\lambda_1!\cdot...\cdot\lambda_n!1^{\lambda_1}\cdot...\cdot n^{\lambda_n}}\prod_{k=1}^{n}\left(\sum_{j|k}j\lambda_j\right)^{\lambda_k}
    \label{eqn:nresult}
    \end{equation}
\end{theorem}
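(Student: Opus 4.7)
The plan is to apply the orbit-counting theorem (Burnside's lemma) to the action of $S_n$ on $X$ by conjugation, and then identify the resulting sum with the quantity $N_{\rm pairs}$ already given by equation \ref{eqn:N_pairs}. Since the authors have already verified that $\sim$ is an equivalence relation and that its equivalence classes coincide with the orbits of the conjugation action, the counting problem reduces cleanly to counting orbits.

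First I would invoke Burnside's lemma, which tells us that the number of orbits of $S_n$ acting on $X$ equals
\begin{equation*}
|X/S_n| \;=\; \frac{1}{|S_n|}\sum_{\sigma\in S_n} |X^\sigma|,
\end{equation*}
where $X^\sigma=\{f\in X: \sigma\cdot f = f\}$ is the fixed-point set of $\sigma$. For the conjugation action, the condition $\sigma\circ f\circ\sigma^{-1}=f$ is equivalent to $\sigma\circ f=f\circ\sigma$, so $X^\sigma$ is exactly the set $X_\sigma$ defined in equation \ref{eqn:X_sig}. Hence
\begin{equation*}
|X/S_n| \;=\; \frac{1}{n!}\sum_{\sigma\in S_n}|X_\sigma|.
\end{equation*}

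Next I would interpret the sum $\sum_{\sigma\in S_n}|X_\sigma|$ as the cardinality of the set of ordered pairs $(\sigma,f)\in S_n\times X$ such that $\sigma\circ f=f\circ\sigma$. This is precisely $N_{\rm pairs}$ from Proposition 2.6 of \cite{Shattuck-Holloway}, whose closed form is given in equation \ref{eqn:N_pairs}. Substituting this expression and cancelling the factor of $n!$ in the numerator against the $1/n!$ from Burnside yields the claimed formula \eqref{eqn:nresult}.

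The only conceptual work required beyond the cited result is verifying that Burnside's lemma is applicable (i.e.\ that $\sim$ really is the orbit relation of a group action, and that $X$ is finite), both of which are immediate from the setup in the preceding paragraphs. There is no real obstacle: the combinatorial heavy lifting, namely the enumeration of fixed functions in terms of the cycle type $(\lambda_1,\dots,\lambda_n)$ of $\sigma$ and the divisor sums $\sum_{j\mid k} j\lambda_j$, is entirely absorbed into the cited Proposition 2.6. If anything, the main subtlety to flag is just that $X_\sigma$ depends only on the cycle type of $\sigma$, which is what lets Proposition 2.6 be organized as a sum over partitions of $n$ rather than over individual permutations; this is already implicit in the form of equation \ref{eqn:N_pairs}, so the proof reduces to a direct substitution.
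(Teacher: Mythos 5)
Your proposal is correct and follows essentially the same route as the paper: both apply Burnside's lemma to the conjugation action, identify $\sum_{\sigma}|X_\sigma|$ with the count of commuting pairs $(\tau,f)$ from Proposition 2.6 of the cited reference, and cancel the $n!$. No substantive differences to report.
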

\begin{proof}
We denote $X$ to be the set of all functions from $\mathbb{Z}_n$ to itself.
    The number of functions $f \in X$,  equivalent up to permutation, is equal to the number $r$ of distinct orbits under the conjugation action of the permutation group $S_n$ on $X$.  This is a setting where we can apply Burnside's lemma for the number of orbits, giving 
    \begin{equation}
        r=\frac{1}{n!}\sum_{\sigma\in S_{\mathbb{Z}_n}}|X_{\sigma}|.
        \label{eqn:r}
    \end{equation}
    Here we have used the fact that $n! = |S_n|$ and $X_{\sigma}$ as defined in equation \ref{eqn:X_sig}.
    The sum $\sum_{\sigma\in S_n}|X_{\sigma}|$ is precisely the number of all ordered pairs $(\tau,f)$ such that $\tau\circ f=f\circ \tau$, where permutation $\tau \in S_n$ and $f$ is a function from $\mathbb{Z}_n$ to itself. Applying Proposition 2.6 from \cite{Shattuck-Holloway} (via equation \ref{eqn:N_pairs}) with equation \ref{eqn:r}, we obtain the desired result of equation \ref{eqn:nresult}.
\end{proof}
%\FloatBarrier

\section{Properties of function-generated channels for 3-state systems}
\label{ap:3prop}

In Table \ref{tab:3prop} we include 
the eigenvalues and left and right eigenvectors of the channels for
three state systems 
with Kraus operators listed in Table \ref{tab:3chan} and illustrated in Figure 
\ref{fig:class3}. By r-eigenvector we mean right eigenvector and by l-eigenvector
we mean left eigenvector. 
We list left eigenvectors if the dual channel
differs from the channel;  $\cal E$ differs from ${\cal E}^\ddagger$. 
For eigenvalues of modulus 1, we choose orthogonal right eigenvectors, and conserved quantities or left eigenvectors  that satisfy a biorthogonality condition via the Frobenius inner product. 
These examples help illustrate different cases in the proof of appendix  
\ref{ap:proofs}. 

\begin{table*}[ht]\footnotesize
\caption{Properties of function-generated channels for 3 state systems \label{tab:3prop}}
\begin{tabular}{lllll}
\hline
Channel & eigenvalue & r-eigenvec. &  l-eigenvec. & notes \\
\hline
$B_1$  & 1 & $\ket{i}\bra{j}$ $\forall i,j$ & 
${\cal E} = {\cal E}^\ddagger$ & Identity channel  \\
$C_1$  & $\omega^k$ $k \in {\mathbb Z}_3$ & $\frac{1}{\sqrt{3}}\sum_{j=0}^2 \omega^{-jk} \ket{j}\!\bra{j}$   &   ${\cal E} \sim  {\cal E}^\ddagger$  via permutation & Unitary,  $\omega=e^{2\pi i/3}$\\
  & $\omega^k$ & $\frac{1}{\sqrt{3}}\sum_j \omega^{-jk} \ket{j}\!\bra{j+1}$ &&  Unital, 3-cycles  \\
  & $\omega^k$ & $\frac{1}{\sqrt{3}}\sum_j \omega^{-jk} \ket{j+1}\!\bra{j}$ & \\
$E_1$  & $\pm 1$ & $\frac{1}{\sqrt{2}}(\ket{0}\!\bra{0} \pm \ket{1}\!\bra{1}) $&  ${\cal E} = {\cal E}^\ddagger$ & Unitary, 2-cycles \\ 
	    & $\pm 1$ & $\frac{1}{\sqrt{2}}(\ket{0}\!\bra{1}\! \pm\!  \ket{1}\!\bra{0}) $& & Unital   \\
	    & $\pm 1$ & $\frac{1}{\sqrt{2}}(\ket{0}\!\bra{2} \! \pm \! \ket{1}\!\bra{2}) $& \\
	    & $\pm 1$ & $\frac{1}{\sqrt{2}}(\ket{2}\!\bra{0} \! \pm\!  \ket{2}\!\bra{1}) $& \\
            & 1       & $\ket{2}\bra{2} $ & \\
%              & ? & \\
\hline
$B_2$  & 1 & $\ket{i}\!\bra{i}$ $\forall i$ &${\cal E} = {\cal E}^\ddagger$& Identity in subspace $\{ \ket{0},\ket{1}\} $   \\
             &  1  & $\ket{1}\!\bra{0}, \ket{0}\!\bra{1}$ &  & Unital   \\
             &  0 & $ \ket{0}\!\bra{2}, \ket{2}\!\bra{0}, \ket{2}\!\bra{1},\ket{1}\!\bra{2} $ & \\
$C_2$  & $\omega^k$ & $\frac{1}{\sqrt{3}}\sum_j \omega^{-jk} \ket{j}\bra{j}$  & 
                ${\cal E} \sim  {\cal E}^\ddagger$  via permutation & Unital, 3-cycle \\
             & 0  & $\ket{i}\!\bra{2}, \ket{2}\!\bra{i} $ for $i \in \{0,1\}$ & \\
             & 0-associated & $\ket{0}\!\bra{1}, \ket{1}\!\bra{0}$   & & generalized eigenvectors \\
$D_{2a}$ & $\pm 1$  & $\frac{1}{\sqrt{2}}(\ket{0}\!\bra{0}\! \pm\! \ket{1}\!\bra{1})$ &  $\ket{0}\!\bra{0} \!\pm\! \ket{1}\!\bra{1} \!+\! \ket{2}\!\bra{2}$& Unitary in subspace $\{ \ket{0},\ket{1}\}$  \\
               & $\pm 1$  & $\frac{1}{\sqrt{2}}(\ket{0}\!\bra{1}\! \pm\! \ket{1}\!\bra{0})$ & same & 2-cycle    \\
                & 0  & $\ket{0}\!\bra{2}, \ket{2}\!\bra{0}, \ket{1}\!\bra{2}, \ket{2}\!\bra{1}$ & same &\\
                & 0 & $ \frac{1}{\sqrt{2}}(\ket{2}\!\bra{2} \!-\!\ket{0}\!\bra{0}) $  not $\perp$ &  $\ket{2}\!\bra{2}$  & \\
         %       & 0-associated & & & \\
$D_{2b}$ & $\pm 1$  & $\frac{1}{\sqrt{2}}(\ket{0}\!\bra{0}\! \pm\! \ket{1}\!\bra{1})$ & 
               $\frac{1}{\sqrt{2}}(\ket{0}\!\bra{0} \!\pm\! \ket{1}\!\bra{1} \!+\! \ket{2}\!\bra{2})$  & 2-cycle \\
                 & 0 &  $\ket{0}\!\bra{1}, \ket{1}\!\bra{0} , \ket{0}\!\bra{2}, \ket{2}\!\bra{0}$ & 
                      $\ket{0}\!\bra{1}, \ket{1}\!\bra{0} , \ket{1}\!\bra{2}, \ket{2}\!\bra{1}$ & \\
                & 0   & $\frac{1}{\sqrt{2}}(\ket{2}\!\bra{2}\! -\! \ket{0}\!\bra{0})$  not $\perp$ & $\ket{2}\!\bra{2} $ \\
                & 0-associated & $\ket{1}\!\bra{2}, \ket{2}\!\bra{1}$ & 
                $\ket{0}\!\bra{1}, \ket{1}\!\bra{0}$
                & generalized eigenvectors \\
 $E_{2a}$  & $\pm 1$ & $\frac{1}{\sqrt{2}}(\ket{0}\!\bra{0}\! \pm \!\ket{1}\!\bra{1})$ &${\cal E} = {\cal E}^\ddagger$ & Unital, 2-cycle \\
                  & $\pm 1$ & $\frac{1}{\sqrt{2}}(\ket{0}\!\bra{1}\! \pm \!\ket{1}\!\bra{0})$ & & Unitary in subspace $\{ \ket{0}, \ket{1}\} $  \\
                  & 1 & $\ket{2}\!\bra{2}$ &  \\
                  & 0 & $\ket{0}\!\bra{2}, \ket{2}\!\bra{0}, \ket{1}\!\bra{2},\ket{2}\!\bra{1}$& \\
 $E_{2b}$ &   $\pm 1$  & $\frac{1}{\sqrt{2}}(\ket{0}\!\bra{0}\! \pm\! \ket{1}\!\bra{1} )$ & same & Unital, 2-cycle\\
                 &   1  &  $\ket{2}\bra{2}  $  &  same &\\
                 & 0 & $\ket{0}\!\bra{1}, \ket{1}\!\bra{0}, \ket{1}\!\bra{2},\ket{2}\!\bra{1} $ & 
                       $\ket{0}\!\bra{1}, \ket{1}\!\bra{0}, \ket{0}\!\bra{2},\ket{2}\!\bra{0} $ \\
                 & 0-associated & $\ket{0}\!\bra{2},\ket{2}\!\bra{0}$ & $\ket{1}\!\bra{2},\ket{2}\!\bra{1} $& generalized eigenvectors\\
$F_{2a}$ &  1 & $\ket{1}\!\bra{1}, \ket{2}\!\bra{2}$ &  $\ket{2}\!\bra{2}$, $\ket{0}\!\bra{0} \!+ \!\ket{1}\!\bra{1}$  & \\
                & 0  & $\ket{1}\!\bra{0}, \ket{0}\!\bra{1}, \ket{1}\!\bra{2}, \ket{2}\!\bra{1}$ & same & \\
                 & 0 & $\frac{1}{\sqrt{2}}(\ket{0}\!\bra{0}\! - \!\ket{1}\!\bra{1})$ not $\perp$ & $\ket{0}\!\bra{0}$ \\
                 & 0-associated & $\ket{0}\!\bra{2}, \ket{2}\!\bra{0}$ & $\ket{2}\!\bra{1},\ket{1}\!\bra{2}$ & generalized eigenvectors \\
$F_{2b}$ &  1 & $\ket{2}\!\bra{2},\ket{1}\!\bra{2}, \ket{2}\!\bra{1} $ &  same 
& Identity in subspace $\{ \ket{1},\ket{2} \}$ \\
&  1 & $\ket{1}\!\bra{1}$ & $\ket{0}\!\bra{0}\!+\! \ket{1}\!\bra{1}$  & \\
               &  0 &   $\ket{0}\!\bra{1}, \ket{1}\!\bra{0}, \ket{0}\!\bra{2},\ket{2}\!\bra{0}$   & same & \\
               &  0  &$\frac{1}{\sqrt{2}}(\ket{0}\!\bra{0} \!-\! \ket{1}\!\bra{1})$   not $\perp $ & $\ket{0}\!\bra{0} $& \\
$G_{2a}$ &  1 & $\ket{2}\!\bra{2} $ & $\ket{0}\!\bra{0}\!+\!\ket{1}\!\bra{1}\!+\!\ket{2}\!\bra{2}$ &Ergodic, mixing\\
                & 0  & $\ket{0}\!\bra{2}, \ket{2}\!\bra{0}, \ket{2}\!\bra{1}, \ket{1}\!\bra{2}$ & $\ket{0}\!\bra{2}, \ket{2}\!\bra{0},\ket{0}\!\bra{1}, \ket{1}\!\bra{0}$&\\
                &  0 & $ \frac{1}{\sqrt{2}}(\ket{1}\!\bra{1} \!-\! \ket{2}\!\bra{2} )$ not $\perp$ & $\ket{0}\!\bra{0}$ \\
                & 0-associated & $\frac{1}{\sqrt{2}}( \ket{0}\!\bra{0} \!-\! \ket{2}\!\bra{2})$ & $ \ket{0}\!\bra{0} $  & generalized eigenvector \\
                & 0-associated & $\ket{1}\!\bra{0}, \ket{1}\!\bra{0}$ &$\ket{0}\!\bra{0}\!+\! \ket{1}\!\bra{1}$ & generalized eigenvectors \\
$G_{2b}$ &  1 & $\ket{2}\!\bra{2}$ & $\ket{0}\!\bra{0}\!+\! \ket{1}\!\bra{1}\!+\!\ket{2}\!\bra{2}$& Ergodic, Mixing \\
                 & 0 & $ \ket{1}\!\bra{2}, \ket{2}\!\bra{1}, \ket{0}\!\bra{1},\ket{1}\!\bra{0}$ &
                         $ \ket{0}\!\bra{1}, \ket{1}\!\bra{0}, \ket{0}\!\bra{2},\ket{2}\!\bra{0}$ \\
	        & 0 & $\frac{1}{\sqrt{3}}(\ket{0}\!\bra{0}\! +\!\ket{1}\!\bra{1}\!-\! \ket{2}\!\bra{2})$  not $\perp$ & $\ket{0}\!\bra{0}$ &\\
	        & 0 & $\frac{1}{\sqrt{2}}(\ket{1}\!\bra{1} \!-\! \ket{2}\!\bra{2})$  not $\perp$  & \\
	        & 0-associated & $\ket{0}\!\bra{2}, \ket{2}\!\bra{0}$ & $\ket{1}\!\bra{2}, \ket{2}\!\bra{1}, \ket{1}\!\bra{1}$&  generalized eigenvectors\\
\hline	        
\end{tabular}\end{table*}

  Left-eigenvectors with eigenvalues of 1 are conserved quantities.   The identity is always a conserved quantity. 
For channel $F_{2a}$ $\ket{0}\!\bra{0} + \ket{1}\!\bra{1}$ is also a conserved quantity, 
  which implies that the trace of the subspace spanned by $\ket{0}, \ket{1}$ 
  is preserved via the channel.   
The $D_{2a}$ channel has   conserved quantity $\ket{0}\!\bra{1} +  \ket{1}\!\bra{0}$ in addition   to $\ket{0}\!\bra{0} +  \ket{1}\!\bra{1}$.  The two conserved quantities within    the subspace spanned by $\{ \ket{0}, \ket{1}\}$ are present because the channel gives a unitary transformation within this subspace.    

\addtocounter{table}{-1}
	        
\begin{table*}[ht]\footnotesize
\caption{Properties of function-generated channels for 3 state systems -continued}
\begin{tabular}{lllll}
\hline
Channel & eigenvalue & r-eigenvec. & l-eigenvec. & notes \\
\hline	
$A_3$ & 1  & $\ket{1}\!\bra{1} $ & $\ket{0}\!\bra{0} \!+\! \ket{1}\!\bra{1}\! +\!\ket{2}\!\bra{2} $ & Dephasing \\ 
            & 0 & $\frac{1}{\sqrt{2}}(\ket{0}\!\bra{0} \!-\! \ket{1}\!\bra{1})$ & $\ket{0}\!\bra{0}$  & Ergodic, Mixing \\
        & 0 &    $\frac{1}{\sqrt{2}}(\ket{2}\!\bra{2} \!-\! \ket{1}\!\bra{1})$ & $ \ket{2}\!\bra{2}$  &  \\
            & 0 & $\ket{i}\!\bra{j} $ for $ i\ne j$ &same  &   \\   
$B_3$  &  1 & $\ket{i}\bra{i}$ $\forall i$ & ${\cal E} = {\cal E}^\ddagger$&Dephasing\\
	    &  0 & $ \ket{i}\!\bra{j}$ for $ i \ne j$ & &  Unital \\
$C_3$  & $\omega^k$  $k \in {\mathbb Z}_3$ & $\frac{1}{\sqrt{3}}\sum_j \omega^{-jk} \ket{j}\!\bra{j}$ & ${\cal E} \sim  {\cal E}^\ddagger$  via permutation  & 3-cycle, $\omega=e^{2\pi i/3}$\\
                          & 0 & $\ket{i}\!\bra{j} $ for $ i\ne j$ & &  Dephasing, Unital \\
$D_3$ & $\pm 1$  & $\frac{1}{\sqrt{2}}(\ket{0}\!\bra{0} \!\pm\! \ket{1}\!\bra{1})$ & $\frac{1}{\sqrt{2}}(\ket{0}\!\bra{0} \!\pm \!\ket{1}\!\bra{1}\! + \!\ket{2}\!\bra{2})$& 2-cycle \\
            &  0            & $\frac{1}{\sqrt{2}}(\ket{2}\!\bra{2} \!-\! \ket{0}\!\bra{0}) $  not $\perp$&$\ket{2}\!\bra{2}$ & Dephasing \\
            &  0           & $\ket{i}\!\bra{j}$ for $ i\ne j$ & same & \\
$E_3$ & $\pm 1$  & $\frac{1}{\sqrt{2}}(\ket{0}\!\bra{0} \!\pm \!\ket{1}\!\bra{1})$ & ${\cal E} = {\cal E}^\ddagger$& 2-cycle \\
           & 1         & $\ket{2}\!\bra{2}$ & &  Dephasing, Unital\\
           & 0 & $\ket{i}\!\bra{j}$ for $i\ne j$ \\
$F_3$ &  1 & $ \ket{1}\!\bra{1}, \ket{2}\!\bra{2} $ & 
                  $\ket{0}\!\bra{0}\!+\!\ket{1}\!\bra{1},\ket{2}\!\bra{2}$ & Dephasing \\
              & 0 & $\frac{1}{\sqrt{2}}(\ket{0}\!\bra{0}\! -\! \ket{1}\!\bra{1})$  not $\perp$ & $\ket{0}\!\bra{0}$ \\\
             & 0 & $\ket{i}\!\bra{j}$ for $i\ne j$  & same & \\
$G_3$ &  1 & $\ket{2}\!\bra{2}$ & $\ket{0}\!\bra{0} \!+\! \ket{1}\!\bra{1}\!+\! \ket{2}\!\bra{2}$ & Dephasing    \\
               &  0 & $\frac{1}{\sqrt{2}}(\ket{1}\!\bra{1}\! -\! \ket{2}\!\bra{2}) $  not $\perp$ & $\ket{0}\!\bra{0}$ & Ergodic, Mixing  \\
              & 0-associated & $\frac{1}{\sqrt{2}}(\ket{0}\!\bra{0} \!-\! \ket{2}\!\bra{2}) $ & $\ket{1}\!\bra{1}$& generalized l-eigenvector \\
	& 0 & $\ket{i}\bra{j}$ for $i\ne j$  & same & \\
\hline
\end{tabular}
\end{table*}

\section{Eigenvalues of a discrete function-generated channel}
\label{ap:proofs}

\begin{theorem}
Consider 
 ${\cal L}_{\cal E}$ the matrix representation of a channel ${\cal E}_{S_D,f}$ generated from a function $f$ and collection of disjoint sets $S_D$
defined as in Definition \ref{def:chan}.  All eigenvalues
of  ${\cal L}_{\cal E}$ are either 0 or have modulus unity.  
\end{theorem}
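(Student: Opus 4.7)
\medskip
\noindent\textbf{Proof plan.} The plan is to pick the matrix-unit basis $F_{(i,k)} = \ket{i}\!\bra{k}$ of $\mathcal{B}(\mathcal{H})$ and exploit the fact that, in this basis, $\mathcal{L}_{\mathcal{E}}$ is the incidence matrix of a partial self-map of the finite set $\mathbb{Z}_N^2$. First I would compute $\mathcal{E}(\ket{i}\!\bra{k})$ directly from the Kraus representation (\ref{eqn:ES}). Because $|f(S_j)|=|S_j|$, the restriction $f|_{S_j}$ is injective; the resulting double sum over $a,b\in S_j$ collapses to a single term, yielding
\begin{equation*}
\mathcal{E}(\ket{i}\!\bra{k}) =
\begin{cases}
\ket{f(i)}\!\bra{f(k)} & \text{if } i,k \in S_j \text{ for some } j,\\
0 & \text{otherwise.}
\end{cases}
\end{equation*}
Hence $\mathcal{L}_{\mathcal{E}}$ is the $N^2 \times N^2$ 0--1 matrix encoding the partial map $\Phi$ on $\mathbb{Z}_N^2$ defined by $\Phi(i,k) = (f(i),f(k))$ when $(i,k)\in S_j\times S_j$ for some $j$, and undefined when $i$ and $k$ belong to different disjoint sets.

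Second, I would analyse the dynamics of $\Phi$ on the finite set $\mathbb{Z}_N^2$. Every forward orbit either reaches an undefined input (and is ``killed'') or becomes eventually periodic. Let $C\subseteq\mathbb{Z}_N^2$ be the set of periodic points of $\Phi$ and $T$ its complement. Since $\Phi(C)\subseteq C$, ordering the matrix-unit basis as $(C,T)$ makes $\mathcal{L}_{\mathcal{E}}$ block upper-triangular,
\begin{equation*}
\mathcal{L}_{\mathcal{E}} = \begin{pmatrix} P & A \\ 0 & B \end{pmatrix},
\end{equation*}
so that $\mathrm{spec}(\mathcal{L}_{\mathcal{E}}) = \mathrm{spec}(P)\cup\mathrm{spec}(B)$. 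The third step is to identify $P$ as a permutation matrix: $\Phi|_C$ decomposes into a disjoint union of cyclic shifts on its finitely many periodic orbits, so $\mathrm{spec}(P)$ consists of roots of unity whose orders divide the cycle lengths. Simultaneously, $B$ is nilpotent: each $(i,k)\in T$ has a finite link length $k_c(i,k)$ after which $\Phi^{k_c}(i,k)$ either lies in $C$ or is undefined, and taking the maximum over the finite set $T$ yields a uniform exponent $m^*$ with $B^{m^*}=0$. Hence $\mathrm{spec}(B)=\{0\}$, and the claim follows. This same nilpotency argument also underwrites the bound on the maximum number of iterations to reach an end-state claimed in property 11 of Section~\ref{sec:props}.

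The main obstacle is justifying the block structure rigorously, and in particular verifying that $\Phi|_C$ is genuinely a permutation. The subtlety is that $\Phi$ is globally many-to-one: a pair $(i,k)\in S_j\times S_j$ and a pair $(i',k')\in S_{j'}\times S_{j'}$ with $j\ne j'$ can share an image $(f(i),f(k))=(f(i'),f(k'))$, so $\mathcal{L}_{\mathcal{E}}$ is not a permutation matrix globally, nor is $\Phi$ injective. I would handle this by showing that any surplus preimages of a cycle point must themselves lie in $T$: if $(a,b)$ and $(a',b')$ are both periodic preimages of the same $(c,d)\in C$, then tracing the forward orbit of $(c,d)$ once around its cycle recovers the unique period-$(m{-}1)$ iterate and forces $(a,b)=(a',b')$. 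The extra preimages therefore contribute only to the off-diagonal block $A$, which affects generalized eigenvectors but not the spectrum.
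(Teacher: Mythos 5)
Your proposal is correct, and it reaches the conclusion by a genuinely different route than the paper. Both arguments start from the same key computation, namely that ${\cal E}(\ket{i}\!\bra{k}) = \ket{f(i)}\!\bra{f(k)}$ when $i,k$ lie in a common $S_j$ and ${\cal E}(\ket{i}\!\bra{k})=0$ otherwise (one small quibble: the collapse of the double sum follows from the orthonormality deltas $\delta_{ai}\delta_{kb}$ alone, not from the injectivity of $f|_{S_j}$; injectivity is what makes the Kraus set complete and what guarantees the image is again a single matrix unit with $f(i)\ne f(k)$). From there the paper proceeds by an explicit case-by-case construction: it treats diagonal and off-diagonal matrix units separately, builds Fourier-transform eigenvectors for each cycle, exhibits differences like $\ket{i}\!\bra{i}-\ket{f(i)}\!\bra{f(i)}$ as kernel vectors, identifies chains of generalized eigenvectors for pre-periodic points, and counts $N^2$ linearly independent (generalized) eigenvectors. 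You instead package everything into the single observation that ${\cal L}_{\cal E}$ is the incidence matrix of a partial self-map $\Phi$ of $\mathbb{Z}_N^2$, split the index set into periodic points $C$ and transients $T$, and read off the spectrum from the block-triangular form as $\mathrm{spec}(P)\cup\mathrm{spec}(B)$ with $P$ a permutation matrix and $B$ nilpotent; your injectivity argument for $\Phi|_C$ (inverting one step by going once more around the cycle) is exactly what is needed to make $P$ a permutation, and the uniform nilpotency exponent correctly recovers the link-length bound of property 11. Your version is shorter, unifies the diagonal and off-diagonal cases, and isolates the structural reason for the dichotomy; the paper's longer construction buys the explicit eigenvectors, generalized-eigenvector chains, and Jordan-block description that the rest of the paper relies on (equation \ref{eqn:rhom}, the identification of the asymptotic subspace, and the tabulated examples), which your spectral argument does not by itself provide.
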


%{\bf Proof}
\begin{proof}
We work in the orthonormal basis $\ket{i}\!\bra{k}$ (with $i,k \in {\mathbb Z}_N$) for operators with respect
to the Hilbert-Schmidt or Frobenius inner product.  
Here $N$ is the dimension of the Hilbert space. 
We consider how the channel affects diagonal basis operators first.

\subsection{Diagonal basis vectors}

For every $i \in  {\mathbb Z}_N$, $i$ is either a fixed point of the function ($f(i)=i$),
is a member of a cycle ($f^k(i)=i$ for $k>1$) or there exists a $k>1$ such that
$f^k(i)$ is either a fixed point or a member of a cycle. We consider each of these
cases separately.  In each case we find a right-eigenstate of ${\cal L}_{\cal E}$.

If $f(i) = i$ is a fixed point of the function, 
then $\ket{i}\!\bra{i}$ is a fixed point of the channel with eigenvalue 1.
This follows as the index $i$ belongs to one of the sets $S_j$ and application of $K_j$ gives ${\cal E}(\ket{i}\!\bra{i}) = K_j \ket{i}\!\bra{i} K_j^\dagger = \ket{i}\!\bra{i}$. 

If $i$ is a member of a $k$-cycle of $f$, with $k>1$ then there are $k$ associated
eigenvectors and eigenvalues of the channel. 
We find the set $S_j$ that contains $i$.
Application of the channel 
${\cal E}(\ket{i}\!\bra{i} )= K_j \ket{i}\!\bra{i} K_j^\dagger = \ket{f(i)}\!\bra{f(i)}$.
%Application of $K_j$ on $\ket{i}\bra{i}$ where $S_j$ contains $i$ gives $\ket{f(i)}\bra{f(i)}$.
By applying the channel a second time, we obtain $\ket{f^2(i)}\!\bra{f^2(i)}$.
We apply the channel iteratively until we obtain the original state $\ket{i}\bra{i}$.  
A total of 
$k$ orthogonal (via the Hilbert-Schmidt inner product) right-eigenstates of ${\cal L}_{\cal E}$ can be constructed via discrete Fourier transform of the intermediate
states in the cycle and using powers of $\omega = e^{2\pi i/k}$ (see   equation \ref{eqn:rhom}). 
Eigenvalues are complex roots of unity which are integer powers of $\omega$.
The members of the cycle ($j$ such that $f^k(j)=j$ with $k>1$) are different from the fixed points of the function ($i$ such that $f(i)=i$), 
so eigenstates generated from a cycle are orthogonal to eigenstates generated by 
fixed points.   Elements in a cycle or that are fixed points are disjoint from those 
that are not in a cycle or are a fixed point.  Diagonal states 
that are formed via elements that are neither in a cycle or are a fixed 
point would generate  diagonal states that are perpendicular to those generated 
from fixed points or elements of cycles. 
For each $k$ cycle, the quantum Fourier transform gives $k$ linearly independent right-eigenvectors that are sums 
of diagonal operators. 

The remaining integers $i$ are those where 
 $i$ is not a fixed point of the function $f$ and $i$ is not a member
of a cycle.  

If $f(i)$ is a fixed point of $f$, (satisfying $f^2(i) = f(i)$) but $i$ is not a fixed point ($f(i) \ne i$), then 
$\ket{i}\!\bra{i} - \ket{f(i)}\!\bra{f(i)}$ is a right-eigenvector with eigenvalue 0. 
Note that it is not orthogonal (in the sense of the Hilbert-Schmidt inner product)
to the eigenstate associated with the fixed point $\ket{f(i)}\!\bra{f(i)}$.   
However this eigenstate is
linearly independent of the right-eigenstates of ${\cal L}_{\cal E}$
associated with fixed points and with cycles
of the function. 

If $f^2(i)$ is a fixed point of $f$ ($f^3(i) = f^2(i)$) but $f(i)$ is not a fixed point ($f^2(i) \ne f(i)$), then consider 
the operator  
$b = \ket{i}\!\bra{i} - \ket{f^2(i)}\!\bra{f^2(i)}$.
We operate on $b$ with the channel giving
\begin{align*}
{\cal E}(b) &= \ket{f(i)}\!\bra{f(i)} - \ket{f^2(i)}\!\bra{f^2(i)} \\
{\cal E}^2(b) &= 0. 
\end{align*}
The operator ${\cal E}(b)$ is a right-eigenvector of ${\cal L}_{\cal E}$ with eigenvalue 0.
This means $b$ is an generalized right-eigenvector associated with right-eigenvector
 ${\cal E}(b)$ that has a zero eigenvalue. 
This generalized eigenstate is
linearly independent of the previously identified right-eigenstates associated with fixed points and cycles of the function $f$. 

If $f(i)$ is a member of a $k$-cycle of $f$ with $k>1$, and $i$ is not a member
of the cycle, then  
$\ket{i}\!\bra{i} - \ket{f^{k-1}(i)}\!\bra{f^{k-1}(i) }$  is a right-eigenvector with eigenvalue 0. 
This eigenstate is
linearly independent of the right-eigenstates we discussed above that are associated with fixed points and cycles of the function.

If $f^2(i)$ is a member of a $k$-cycle of $f$ with $k>1$, and $f(i)$ is
not in the cycle,  then consider 
the state 
$b = \ket{i}\!\bra{i} - \ket{f(i)}\!\bra{f(i)}$.
We operate on $b$ with the channel giving
${\cal E}(b) = \ket{f(i)}\!\bra{f(i)} - \ket{f^2(i)}\!\bra{f^2(i)}$.
This is the form previously considered above and is
a right-eigenvector which has an eigenvalue of 0.  This implies that 
$b$ is a generalized eigenvector associated with right-eigenvector
 ${\cal E}(b) $ that has a zero eigenvalue. 
 
If $f^k(i)$ the smallest positive $k$ with $k>1$ such that $f^k(i)$ is either
 a member of a cycle or is a fixed point, then 
 $b = \ket{i}\!\bra{i} - \ket{f(i)}\!\bra{f(i)}$ is a generalized eigenvector.
 This follows iteratively using the construction used 
 above as ${\cal E}(b)$ is either a right-eigenvector
 with eigenvalue 0 or a generalized eigenvector with eigenvalue 0. 
 
Altogether we can find a unique eigenvector comprised of states 
on the diagonal for each $i$ value.   If we are careful not to over count members of cycles, 
we find that the set contains $N$ linearly independent 
eigenvectors and generalized eigenvectors comprised of sums of diagonal operators. 
The eigenvectors are independent of the disjoint subsets $\{ S_j\}$ used to generate 
the channel. 
The types of eigenvectors are illustrated in the graph shown in Figure \ref{fig:graph} 
of a function which makes it clearer 
why there are $N$ eigenvalues.   The construction of each type discussed above 
shows that they are linearly independent.  
%That implies  If we count the number of right-eigenvectors and generalized right-eigenvectors, associated with the diagonal elements, we find we have found $N$ of them and they are linearly independent of each other.   
 All the generalized eigenvectors
 are associated with eigenvalues of 0.  Only members of cycles or
 fixed points generate eigenvectors with eigenvalues that are roots of unity.   
 
 \begin{figure}[ht]\centering
 \includegraphics[width =3 truein]{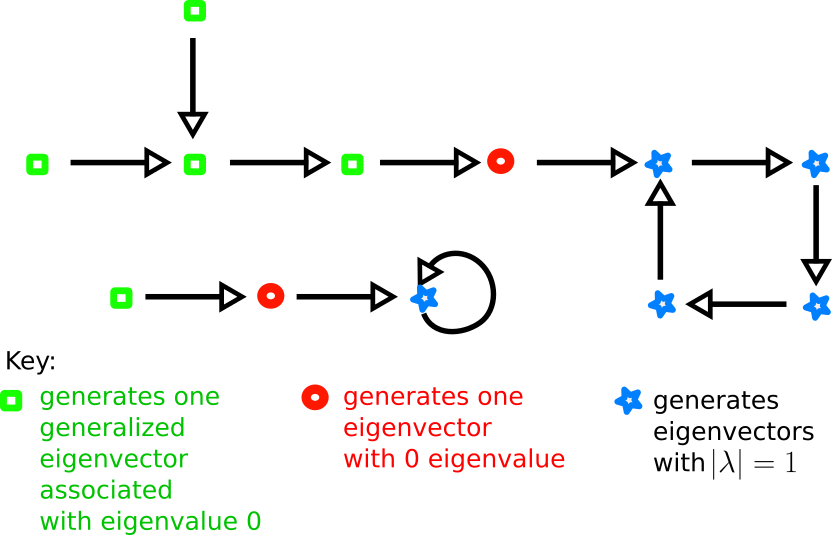}
 \caption{
 We show graphs of a function $f$ and how its structure is related 
 to right-eigenvectors that are comprised of diagonal operators.
 The right eigenvectors are those of ${\cal L}_{\cal E}$ the matrix representation of  
 of a quantum channel that is generated by a function and a set of disjoint sets.
 Each arrow begins at an element $x$ and points to another element $f(x)$. 
 Elements shown as red circles generate a right eigenvector with eigenvalue 0.
 Elements  shown as blue stars generate right eigenvectors with eigenvalue that
 have $|\lambda|=1$. Elements shown with green square generate generalized eigenvectors with eigenvalue 0.   The right eigenvectors comprised of diagonal 
 elements are insensitive to the disjoint sets $\{S_j\}$.  
  \label{fig:graph} }
 \end{figure}
 
 \begin{figure}[ht]\centering
 \includegraphics[width =3.3 truein]{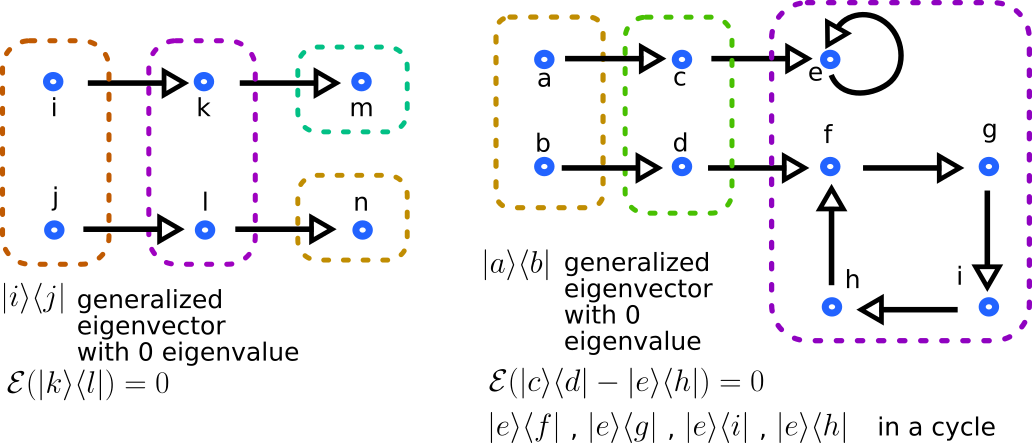}
 \caption{
 We show graphs of a function $f$ and how its structure is related 
 to right-eigenvectors that are comprised of off-diagonal operators.
 The dashed lines show different disjoint sets. 
 On the left $\ket{k}\!\bra{l}$ (with $k \ne l$) and its transpose have eigenvalue 0 and $\ket{i}\!\bra{j}$  ($i \ne j$) 
 and its transpose are generalized eigenvectors with eigenvalue 0. 
 On the right $\ket{a}\!\bra{b}$ (with $a\ne b$) and its transpose are  
 generalized eigenvectors with eigenvalue 0
 and $\ket{c}\!\bra{d}$  ($c\ne d$) and its transpose are right eigenvectors with eigenvalue 0.
 The operators $\ket{e}\!\bra{f}$, $\ket{e}\!\bra{g}$, $\ket{e}\!\bra{h}$,
 and $\ket{e}\!\bra{h}$ (with $e,f,g,h$ all distinct) and are in a cycle and 
 so generate 4 right eigenvectors with eigenvalues that are complex roots of unity.    \label{fig:graphij} }
 \end{figure}

\subsection{Off-diagonal basis vectors} 

We now consider how the channel affects operators comprised of off-diagonal 
terms. 
We consider how the channel operates on $\ket{i}\!\bra{k}$ where $i\ne k$. 
We consider two cases, $i,k$ are both contained in one disjoint set $S_j$
and where there is no disjoint set in $S_D$ containing both $i,k$.

Suppose
there is no disjoint set $S_j \in S_D$ that contains both $i,k$.
We find that $K_j \ket{i}\!\bra{k} K_j^\dagger = 0$
for all $j$ and this implies that  ${\cal E}(\ket{i}\bra{k}) = 0$.
The operator $\ket{i}\!\bra{k}$ is a right-eigenvector 
of ${\cal L}_{\cal E}$ with eigenvalue 0.

Consider the second case for $i,k$ where $i \ne k$ and both are contained
in a disjoint set $S_j$. 
\begin{align*}
{\cal E}( \ket{i}\!\bra{k}  ) \!
&=\! K_j  \ket{i}\!\bra{k} K_j^\dagger\\
&=\! (\ket{f(i)}\!\bra{i}\!+\! \ket{f(k)}\!\bra{k} ) \ket{i}\!\bra{k}  (\ket{i}\!\bra{f(i)}\! +\! \ket{k}\!\bra{f(k)})\\
& = \ket{f(i)}\! \bra{f(k)} .
\end{align*}
The matrices $\ket{i}\!\bra{k}, \ket{k}\!\bra{i}$ are generalized right-eigenvectors associated with eigenvalue 0 unless both $f(i), f(j)$ are in the same disjoint set 
$f(i), f(k) \in S_m$ for some $m$.
Here $S_m$ could be the same disjoint set as $S_j$. 
Note that $f(i) \ne f(k)$ due to a requirement on the nature of the disjoint sets in 
the definition of the channel (Definition \ref{def:chan}).  
Thus ${\cal E}(\ket{i}\!\bra{k})$ must give an off-diagonal matrix. 
Suppose $f(i)=i$   and $f(k) = k$ and $i,j \in S_j$.  
This case $\ket{i}\!\bra{k}$ and  $\ket{k}\!\bra{i}$ are both 
right eigenvectors with eigenvalue of 1. 
If both $f(i), f(k) \in S_m$ (and $i,k$ are not fixed points and  $i,j \in S_j$) 
we can apply the channel iteratively. 
The next iteration gives 
%\begin{align*}
${\cal E}^2(\ket{i}\!\bra{k}) =  \ket{f^2(i)}\!\bra{f^2(k)} .$
%\end{align*} 
Again, if both $f^2(i), f^2(k) $ are contained in different
disjoint sets  then $\ket{i}\!\bra{k}$ is a generalized
eigenvector associated with eigenvalue 0. 

If $f^m(i)=i$, $f^m(k)=k$ for a positive integer $m>1$ and 
for all positive integers $n$ 
there exists a disjoint 
set $S_j$ such that both 
$f^n(i), f^n(k) \in S_j$,   then 
$\ket{i}\!\bra{k}$ generates a cycle. 
From $\ket{i}\!\bra{k}$ and iterates of it,  we can construct a set of 
right-eigenvectors (comprised only of sums of off-diagonal operators) with eigenvalues of modulus 1 using a Fourier transform like that given in equation \ref{eqn:rhom}.
The resulting right-eigenvectors 
are linearly independent and would be linearly independent of off diagonal 
operators that have 0 eigenvalues or that are generalized eigenvectors
associated with a 0 eigenvalue.  
The only other way 
 iteration of $\ket{i}\!\bra{k}$ can terminate (other than terminating via a cycle) is when an iteration of $i$ and $k$ by 
 the function 
gives a pair of elements that are not contained in the same disjoint set.  In this case we have a series of generalized right-eigenvectors associated with eigenvalue 0.   
Examples of the construction of eigenvectors for off-diagonal operators are shown 
 in figure \ref{fig:graphij}.

Each off diagonal term, is either a right eigenvector 
with eigenvalue 0, a generalized right eigenvector associated with eigenvalue 0,
a fixed point
or generates a cycle giving eigenvectors with eigenvalues that are complex roots of unity.  In total we can generate $N(N-1)$ linearly independent 
eigenvectors from each pair off-diagonal elements.  
The union of the $N$ eigenvectors comprised of diagonal operators 
and the $N(N-1)$ eigenvectors comprised of off-diagonal operators 
is a set of $N^2$ linearly independent right eigenvectors. 

We have shown iteratively that we can find a full set ($N^2$) of eigenvalues
of ${\cal L}_{\cal E}$ where
eigenvalues either are roots of unity, arising from the fixed points and cycles of $f$,  
or they have eigenvalue 0.  Zero eigenvalues are associated with 
both right-eigenvectors and generalized right-eigenvectors. In both cases they 
arise from orbits of $f$ that, after iteration, have end-states
that are fixed points or cycles. 
Right-eigenvectors or generalized right-eigenvectors 
are sometimes are not orthogonal (in the sense
of the Hilbert-Schmidt inner product) to each other. However,
we have identified a set of $N^2$ right-eigenvectors and generalized
right-eigenvectors that are linearly independent. 
The Jordan form of ${\cal L}_{\cal E}$ consists of Jordan blocks of dimension 1
that contain eigenvalues of modulus 1, and Jordan blocks that can have a larger dimension that have eigenvalues of 0.   As the non-trivial Jordan blocks
are associated with chains of $f$ that have end-states that are  
fixed points or cycles,  the maximum dimension of any Jordan block is the maximum
link number $k_c$ (as defined in equation \ref{eqn:k_c}).
As every Jordan block of ${\cal L}_{\cal E}$ generates an invariant subspace of the 
${\cal L}_{\cal E}$, and non-zero
eigenvalues have modulus of 1, the asymptotic subspace 
of the channel is spanned by the set of right eigenvectors with 
non-zero eigenvalues. 
\end{proof}

\end{document}